\declaretheorem[numberwithin=section]{theorem}
\declaretheorem[numberlike=theorem]{lemma, corollary}
\declaretheorem[numberlike=theorem,style=definition]{definition}
\declaretheorem[numberlike=theorem,style=remark]{example}
\declaretheorem[numberlike=theorem,style=remark]{remark}
\declaretheoremstyle[
notefont=\bfseries, notebraces={(}{)}
]{conjecture}
\newcommand{\NN}{\mathbb{N}}
\newcommand{\ZZ}{\mathbb{Z}}
\newcommand{\QQ}{\mathbb{Q}}
\newcommand{\KK}{\mathbb{K}}
\newcommand{\FF}{\mathbb{F}}
\newcommand{\NULL}{\texttt{0}}
\newcommand{\EINS}{\texttt{1}}
\newcommand\problemBox[4][0.96\textwidth]{
\begin{tabularx}{#1}{lX}
\toprule
\multicolumn{2}{c}{#2} \\ \midrule
\textbf{Input:}    & #3  \\
\textbf{Output:} & #4  \\ \bottomrule
\end{tabularx}
}
\newcommand{\Landau}{\mathcal{O}}
\newcommand{\pclass}[1]{\textsf{\textup{#1}}}
\newcommand{\Mon}{\pclass{Mon}}
\newcommand{\Homog}{\pclass{Homog}}
\newcommand{\x}{{\underline x}}
\renewcommand{\t}{{\underline t}}
\DeclareMathOperator{\ini}{in}
\DeclareMathOperator{\NF}{nf}
\newcommand{\problemText}[1]{\textsc{#1}\xspace}
\newcommand{\IdealMem}{\problemText{IdealMem}}
\newcommand{\AlgMem}{\problemText{AlgMem}}
\newcommand{\IniAlgMem}{\problemText{InAlgMem}}
\newcommand{\CSG}{\problemText{CSG}}
\newcommand{\OneInThreeSAT}{\problemText{1in3Sat}}
\newcommand{\SubsetSum}{\problemText{UnboundedSubsetSum}}
\newcommand{\LBAhalt}{\problemText{LBAhalt}}
\newcommand{\cclass}[1]{\textsf{\textup{#1}}\xspace}
\newcommand{\TCO}{\cclass{TC\ensuremath{^0}}}
\newcommand{\DP}{\cclass{P}}
\newcommand{\NP}{\cclass{NP}}
\newcommand{\PSPACE}{\cclass{PSPACE}}
\newcommand{\EXPSPACE}{\cclass{EXPSPACE}}
\newcommand{\leqlm}{\mathrel{\leq^{\cclass{L}}_{\textup{m}}}}
\newcommand{\ie}{i.\,e.\@\xspace}
\newcommand{\eg}{e.\,g.\@\xspace}
\title{Computational Complexity of Polynomial Subalgebras}
\author{Leonie Kayser}
\date{}
\begin{document}
\maketitle

\begin{abstract}
The computational complexity of polynomial ideals and Gröbner bases has been studied since the 1980s. In recent years, the related notions of polynomial subalgebras and SAGBI bases have gained more and more attention in computational algebra, with a view towards effective algorithms. We investigate the computational complexity of the subalgebra membership problem and degree bounds. In particular, we show completeness for the complexity class \EXPSPACE and prove \PSPACE-completeness for homogeneous algebras. We highlight parallels and differences compared to the settings of ideals, and also look at important classes of polynomials such as monomial algebras.
\end{abstract}

%%%%%%%%%%%%%%%%%%%%%%%%%%%%%%%%%%%%%%%%%%%%%%%%%%%%%%%%%%%%%%%%%%%%%%%%
\section{Introduction}
%%%%%%%%%%%%%%%%%%%%%%%%%%%%%%%%%%%%%%%%%%%%%%%%%%%%%%%%%%%%%%%%%%%%%%%%

Let $\KK$ be a field equipped with a suitable encoding over a finite alphabet, for example, the rational numbers $\QQ$ or a finite field $\FF_q$.
At the heart of many algorithms in computer algebra lie efficient algorithms manipulating polynomial ideals and Gröbner bases, dating back at least to the 1960s \cite{Buchberger2006}. Since the late 1980s, algorithms for computations with subalgebras have been studied \cite{Shannon1988, Robbiano1990}, which have applications in toric degenerations and polynomial system solving \cite{betti2025}. A fundamental computational problem for ideals and subalgebras of polynomial rings $\KK[\x] = \KK[x_1,\dots,x_n]$ is deciding \emph{membership}. Of special interest are classes $\textsf{C}$ of polynomials such as homogeneous polynomials (\Homog), monomials (\Mon), or polynomials with bounded degree or number of variables.

\begin{table}[H]
\centering
\problemBox[0.48\textwidth]{$\IdealMem_\KK(\textsf{C})$, $\IdealMem_\KK$}
{$f_1,\dots,f_s \in \textsf{C}$ (or $\KK[\x]$);\newline $g \in \KK[\x]$}
{Is $g \in \langle f_1,\dots,f_s \rangle$?}
\problemBox[0.48\textwidth]{$\AlgMem_\KK(\textsf{C})$, $\AlgMem_\KK$}
{$f_1,\dots,f_s \in \textsf{C}$ (or $\KK[\x]$);\newline $g \in \KK[\x]$}
{Is $g \in \KK[f_1,\dots,f_s]$?}
\end{table}

Recall that $g \in \langle f_1,\dots,f_s \rangle$ if and only if there is a representation $g = h_1f_1+\dots+ h_sf_s$ with $h_1,\dots,h_s \in \KK[\x]$; the \emph{representation degree} (of the tuple $f_1,\dots,f_s,g$) is the smallest possible value of $D = \max \{\deg h_1,\dots,\deg h_s\}$.
Similarly, $g \in \KK[f_1,\dots,f_s]$ if and only if $g = p(f_1,\dots,f_s)$ for some \emph{certificate} $p \in \KK[t_1,\dots,t_s]$; we call the smallest possible degree of $p$ the \emph{certification degree}.

The computational complexity and representation degree bounds of $\IdealMem_\KK$ and its variants have been studied since the 1980s, we give a brief overview in \Cref{sec:ideal_world}. The complexity of algebras in a more general sense has been studied before \cite{Kozen77}, though so far not for the concrete case of polynomial subalgebras. This paper is concerned with filling this gap and proving analogous results for $\AlgMem_\KK$ and bounds on the certification degree. In order to obtain reasonable upper bounds on the computational complexity, we assume that arithmetic in the base field $\KK$ can be performed efficiently; this is formalized as being \emph{well-endowed} \cite{Borodin1983}. A main theorem is the \EXPSPACE- and \PSPACE-completeness of the subalgebra membership problem and the homogeneous variant, combining the upper bounds \Cref{thm:algmem_expspace} and \ref{thm:algmem_homog} and the complementing lower bound \Cref{thm:reduction_ideal_algebra}.

\begin{theorem}\label{thm:main_result}
Over a well-endowed field $\KK$, the subalgebra membership problem $\emph{\AlgMem}_\KK$ is \EXPSPACE-complete and the homogeneous version $\emph{\AlgMem}_\KK(\Homog)$ is \PSPACE-complete.
\end{theorem}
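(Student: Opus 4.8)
The plan is to prove each completeness statement as the combination of a matching upper bound and lower bound. For the upper bounds there is nothing to do beyond invoking the results established for exactly this purpose: \Cref{thm:algmem_expspace} places $\AlgMem_\KK$ in \EXPSPACE, and \Cref{thm:algmem_homog} places the homogeneous restriction $\AlgMem_\KK(\Homog)$ in \PSPACE, both relying on $\KK$ being well-endowed so that the underlying field arithmetic stays within the space budget. It therefore remains to supply \EXPSPACE-hardness of $\AlgMem_\KK$ and \PSPACE-hardness of $\AlgMem_\KK(\Homog)$.

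For the lower bounds I would reduce from the corresponding ideal membership problems, whose complexity is classical: $\IdealMem_\KK$ is \EXPSPACE-hard by the Mayr--Meyer construction, and its homogeneous restriction $\IdealMem_\KK(\Homog)$ is \PSPACE-hard, as recalled in \Cref{sec:ideal_world}. The bridge is the reduction \Cref{thm:reduction_ideal_algebra}, which I expect to be an $\leqlm$-reduction transforming an ideal membership instance into an algebra membership instance. Concretely, the natural encoding introduces a fresh variable $t$ and sends the query $g \in \langle f_1,\dots,f_s\rangle$ to the query $t g \in \KK[x_1,\dots,x_n,\, t f_1,\dots,t f_s]$: grading by the $t$-degree, the degree-one part of any subalgebra element is exactly $t$ times a $\KK[\x]$-combination of the $f_i$, so the two memberships coincide. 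Crucially this map preserves homogeneity --- if the $f_i$ and $g$ are homogeneous then so are $t f_i$ and $t g$ --- so the same reduction transfers the homogeneous hardness, and it is computable in logarithmic space.

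Composing the reduction with the two hardness results then yields \EXPSPACE-hardness of $\AlgMem_\KK$ and \PSPACE-hardness of $\AlgMem_\KK(\Homog)$, and together with the upper bounds above this settles both completeness claims.

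I expect the genuine difficulty to lie entirely in the three imported theorems rather than in the present assembly. In particular, the upper bound \Cref{thm:algmem_expspace} must rest on a doubly-exponential bound on the certification degree combined with a space-bounded search over candidate certificates, while the reduction \Cref{thm:reduction_ideal_algebra} must faithfully encode ideal membership inside a subalgebra while remaining logspace-computable and homogeneity-preserving; verifying that the $t$-degree-one argument is both necessary and sufficient, and that it survives the well-endowedness bookkeeping, is the step I would scrutinize most carefully. Granting those, the proof of \Cref{thm:main_result} is a short combination argument.
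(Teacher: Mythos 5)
Your proposal is correct and follows exactly the paper's own proof: upper bounds from \Cref{thm:algmem_expspace} and \Cref{thm:algmem_homog}, lower bounds by composing the hardness of $\IdealMem_\KK$ and $\IdealMem_\KK(\Homog)$ (\Cref{thm:idealmem_expspace}) with the reduction of \Cref{thm:reduction_ideal_algebra}, whose $t$-grading argument you describe accurately. Nothing further is needed.
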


One of the goals of this paper is to be accessible both to commutative algebraists and computer scientists. For this reason, we introduce the relevant results from algebra and complexity theory on ideals and Gröbner bases in \Cref{sec:background}. In \Cref{sec:upper_bounds} we prove various upper bound results on variations of $\AlgMem_\KK$, as well as an upper bound on the certification degree. In \Cref{sec:lower_bounds} we present a complexity-theoretic reduction from $\IdealMem_\KK$ to $\AlgMem_\KK$, proving matching lower bounds. In \Cref{sec:monomials} we study the case of monomial subalgebras, which is still \NP-complete, and discuss related questions with SAGBI bases.

%%%%%%%%%%%%%%%%%%%%%%%%%%%%%%%%%%%%%%%%%%%%%%%%%%%%%%%%%%%%%%%%%%%%%%%%
\section{Background in algebra and complexity theory}\label{sec:background}
%%%%%%%%%%%%%%%%%%%%%%%%%%%%%%%%%%%%%%%%%%%%%%%%%%%%%%%%%%%%%%%%%%%%%%%%

%%%%%%%%%%%%%%%%%%%%%%%%%%%%%%%%%%%%%%%%%%%%%%%%%%%%%%%%%%%%%%%%%%%%%%%%
\subsection{Ideals, subalgebras and their bases}\label{sec:ideals}
%%%%%%%%%%%%%%%%%%%%%%%%%%%%%%%%%%%%%%%%%%%%%%%%%%%%%%%%%%%%%%%%%%%%%%%%

%Standard references for ideals and computational algebra are the books \cite{Cox2016,Gathen2017,Robbiano1990}.
Let $\KK$ be a field, denote by $\KK[\x] \coloneqq \KK[x_1,\dots,x_n]$ the polynomial ring in $n$ variables, consisting of finite linear combinations
\[
f = \sum_{\alpha \in \NN^n} c_\alpha  x^\alpha, \qquad x^\alpha = x_1^{\alpha_1}\dotsm x_n^{\alpha_n}, \quad c_\alpha \in \KK.
\]
Ideals and subalgebras arise naturally in algebra as the kernels and images of ($\KK$-linear) ring homomorphisms. We briefly recall the basic definitions for both of them in parallel.

\begin{definition}
A subset $I \subseteq \KK[\x]$ is an \emph{ideal} if it is an additive subgroup (contains $0$ and is closed under addition and subtraction) and also closed under multiplication with \emph{arbitrary} polynomials $f \in \KK[\x]$. A subset $A \subseteq \KK[\x]$ is a \emph{subalgebra} if it is an additive subgroup containing $\KK$ and is closed under multiplication within $A$.
Given a set $S \subseteq \KK[\x]$, we denote by
\begin{align*}
  \langle S \rangle &= \Set{ \sum_{\text{finite}} h_i f_i | \!\!\!\begin{array}{c}
f_i \in S,\\ h_i \in \KK[\x]
  \end{array}\!\!\!} = \set{\text{lin.\ comb.\ of $S$ over $\KK[\x]$}} \\
  \KK[S] &= \Set{ \sum_{\text{finite}} c_\alpha f^\alpha | \!\!\!\begin{array}{c}
   f_i \in S,\\ c_\alpha \in \KK
  \end{array}\!\!\!  } = \set{\text{polynomial expressions in $S$}}
\end{align*}
the ideal and subalgebra generated by $S$; it is the smallest ideal resp.\ subalgebra containing $S$.
\end{definition}

\begin{example}
An important class of ideals and algebras are those generated by monomials $x^\alpha$, or binomials $x^\alpha-x^\beta$. For example, monomial algebras show up as coordinate rings of affine toric varieties, and binomial ideals are related to commutative Thue systems, capturing the combinatorial worst-case complexity of ideal membership (see \Cref{thm:idealmem_expspace}).
\end{example}

\begin{definition}
A \emph{monomial order} $\prec$ is a total order on the set of monomials $x^\alpha$ stable under multiplication ($x^\alpha \preceq x^\beta$ implies $x^\alpha x^\gamma \preceq x^\beta x^\gamma$) and such that $1 = x^0$ is the minimal element.
The \emph{initial term} $\ini_\prec(f)$ of a nonzero polynomial $f = \sum_\alpha c_\alpha x^\alpha$ is the nonzero term $c_\alpha x^\alpha$ with the largest monomial. For convenience define $\ini_\prec(0) = 0$. Given an ideal $I$ or a subalgebra $A$, its initial ideal or initial algebra, respectively, is
\[
\ini_\prec(I) = \langle \set{\ini_\prec(f) | f \in I} \rangle, \quad \ini_\prec(A) = \KK[\set{\ini_\prec(f) | f \in A}].
\]
\end{definition}

In the following fix such a monomial order, for example, the \emph{lexicographic order} $\prec_{\text{lex}}$; we may assume $x_1>\dots>x_n$.

\begin{definition}\label{def:Groebner_SAGBI}
A \emph{Gröbner basis} of an ideal $I$ is a \emph{finite} subset $G \subseteq I$ such that $\ini_\prec(I) = \langle \set{\ini_\prec(g) | g \in G}\rangle$. A Gröbner basis is \emph{reduced} if the leading coefficients are $1$ and no term of $g \in G$ is in $\ini_\prec(G\setminus\{g\})$.

On the other hand, a \emph{SAGBI basis} of a subalgebra $A$ is \emph{any} subset $S \subseteq A$ such that $\ini_\prec(A) = \KK[\set{\ini_\prec (f) | f \in S}]$.
\end{definition}
In both cases, such bases generate $I$ and $A$ respectively, hence the (historical) name \enquote{basis}. The reduced Gröbner basis of an ideal $I$ is unique and has the smallest number of elements and smallest degree among all Gröbner bases of $I$. For more background on Gröbner bases see for example \cite[Chapter 2]{Cox2016}, \cite[Chapter 21]{Gathen2017} or \cite[Chapter 2]{Kreuzer2000}.

By \emph{Hilbert's basis theorem}, every ideal can be generated by a finite set of polynomials; in fact one can find such a finite set in any generating set. This implies the existence of (finite) Gröbner bases for any ideal. On the other hand, finite SAGBI bases may not exist; we will come back to this topic in \Cref{sec:SAGBIconjecture}.

\begin{definition}
Given an ideal $I \subseteq \KK[\x]$, the \emph{normal form} $\NF_\prec(f,I)$ of an element $f \in \KK[\x]$ against $I$ is the unique polynomial $r \in f+I$ such that no monomial of $r$ is in $\ini_\prec(I)$ \cite[Definition 2.4.8]{Kreuzer2000}.
\end{definition}

One can compute $\NF_\prec(f,I)$ by dividing $f$ by a Gröbner basis $G$, the remainder is the normal form \cite[Proposition 2.6.1]{Cox2016}. This gives a membership test for polynomial ideals: $f \in I$ if and only if $\NF_\prec(f,I) = 0$. 

%%%%%%%%%%%%%%%%%%%%%%%%%%%%%%%%%%%%%%%%%%%%%%%%%%%%%%%%%%%%%%%%%%%%%%%%
\subsection{Complexity theory}\label{sec:complexity}
%%%%%%%%%%%%%%%%%%%%%%%%%%%%%%%%%%%%%%%%%%%%%%%%%%%%%%%%%%%%%%%%%%%%%%%%

We use the Turing model of computation, though the details of this formalism are not required to follow this paper. Here we introduce the concepts and computational problems used in the sequel.
A classical reference for computational complexity theory is the book by Hopcroft and Ullman \cite{Hopcroft1979}, a more modern treatment is the book by Arora and Barak \cite{Arora2009}, for a brief introduction with a view towards computer algebra, see also \cite[Section 25.8]{Gathen2017}.

Informally, the complexity of an algorithm is the amount of resources (time or memory) used in the computation as a function of the \emph{input length}. 

\begin{definition}
A \emph{decision problem} $A$ is the problem of deciding whether a given input $x \in \Sigma^* \coloneqq \{\text{words over the alphabet }\Sigma\}$ has a certain property, \ie deciding membership in the set $A \subseteq \Sigma^*$.
Problems in \DP, \PSPACE, and \EXPSPACE can be solved algorithmically in polynomial time, polynomial space, and exponential space, respectively, while the answer to problems in \NP can be \emph{verified} in polynomial time provided a certificate.
\end{definition}
One has the inclusions of complexity classes
\[
\DP \subseteq \NP \subseteq \PSPACE \subsetneq \EXPSPACE.
\]
A standard problem in \NP is \problemText{Sat}, deciding whether a given boolean formula $\varphi(x_1,\dots,x_n)$ can be satisfied by some assignment $\{x_1,\dots,x_n\} \to \{\texttt{true},\texttt{false}\}$. We will later need the useful variant \OneInThreeSAT (compare \cite{Schaefer78}, also known as \emph{positive/monotone} \OneInThreeSAT) and the \SubsetSum problem:

  \begin{table}[H]
    \centering
    \problemBox[0.48\textwidth]{\OneInThreeSAT}
    {$S_1,\dots,S_n \subseteq \NN$, $|S_i| \leq 3$}
    {Is there a set $T \subseteq \NN$ such that $|T \cap S_i| = 1$ for all $i$?} 
    \problemBox[0.48\textwidth]{\SubsetSum}
    {$a_1,\dots,a_s;b \in \NN$}
    {Is $\sum_{i = 1}^s c_ia_i = b$ for some $c_1,\dots,c_s \in \NN$?}
  \end{table}

A typical example of a problem in \PSPACE is the halting problem for (deterministic) linearly bounded automata (LBA). Here a LBA $M$ consists of a finite set of states $Q$ including a starting state $q_0$ and a halting state $q_{\text{halt}}$, a tape alphabet $\Gamma = \{\NULL,\EINS,{\triangleright},{\triangleleft}\}$ containing the input alphabet $\Sigma = \{\NULL,\EINS\}$, and a transition function
\[
\delta \colon (Q\setminus\{q_{\text{halt}}\}) \times \Gamma \to Q \times \Gamma \times \{{\rm L},{\rm R}\}.
\]
A configuration is a triple $(q,i,b_0b_1\dots) \in Q \times \ZZ \times \Gamma^*$, and if $\delta(q,b_i) = (q',c,X)$, then $M$ transitions as
\[
(q,i,\dots b_{i-1}b_ib_{i+1} \dots ) \Rightarrow \begin{cases}
  (q',i-1, \dots b_{i-1}c b_{i+1} \dots) & \text{if }X = {\rm L},\\
  (q',i+1, \dots b_{i-1}c b_{i+1} \dots) & \text{if }X = {\rm R}. 
\end{cases} 
\]
On the input string $w \in \Sigma^*$, the machine $M$ starts on the starting configuration $(q_0,1,{\triangleright}w_1\dots w_n {\triangleleft})$ (here $\triangleright$ is at position $0$) and repeatedly transitions according to $\delta$. The end markers $\triangleright, \triangleleft$ may never be overwritten and the head may not pass over them (e.g. $\delta(q,{\triangleleft}) = (q',{\triangleleft},{\rm R})$ is not allowed). $M$ halts if it eventually reaches a configuration of the form $(q_{\text{halt}},i,b)$.

\begin{table}[H]
  \centering
  \problemBox[\textwidth]{\LBAhalt}
  {A LBA $M = (Q,\delta,q_0,q_{\text{halt}})$ and an input $w \in \{\NULL,\EINS\}^*$}
  {Does $M$ halt on input $w$?}
\end{table}

\begin{example}\label{exa:binaryLBA}
  The following table shows the transition function of a LBA consists of two states $q_0,q_1$ (plus a halting state), and on input $w \coloneqq \NULL\dots \NULL$ ($n$ zeros) will count in binary to $\EINS \dots \EINS$ and then halt:
  \[
  \begin{array}{c|cccc}
  b & {\triangleright}               & \NULL               & \EINS               & {\triangleleft} \\ \hline \hline
  q_0 &                                & (q_1,\EINS,{\rm L}) & (q_0,\NULL,{\rm R}) & \text{halt}      \\
  q_1 & (q_0,{\triangleright},{\rm R}) & (q_1,\NULL,{\rm L}) &                     &
  \end{array}
  \]
Transitions that do not occur (on input $w$) have been left unspecified for simplicity. An example of the transition sequence can be found at
\[
\text{\href{https://mathrepo.mis.mpg.de/ComplexityOfSubalgebras/configurations.html}{\small \texttt{mathrepo.mis.mpg.de/ComplexityOfSubalgebras/configurations.html}}}
\]
\end{example}

\begin{definition}
A \emph{complexity upper bound} for a decision problem $A$ and a complexity class $\cclass C$ is an algorithm solving $A$ and satisfying the resource constraints of $\cclass C$.
\end{definition}
A \emph{lower bound} or $\cclass C$\emph{-hardness} result for $A$ is a proof that the computation of any problem $A' \in \cclass C$ can be reduced to the problem $A$. We use the notion of log-space many-one reductions:
\begin{definition}
For two sets $A\subseteq \Sigma^*$, $A' \subseteq (\Sigma')^*$, the problem $A'$ is \emph{log-space many-one reducible} to $A$, in symbols $A' \leqlm A$, if there is a map $f\colon (\Sigma')^*\to \Sigma^*$ computable in logarithmic working space, such that $x \in A'$ if and only if $f(x) \in A$.

A problem $A$ is \emph{$\cclass C$-hard}, if $A' \leqlm A$ for all $A'$ in $\cclass C$. It is \emph{$\cclass C$-complete} if it is also in $\cclass C$.
\end{definition}
For example, \problemText{Sat}, \OneInThreeSAT, and \SubsetSum are \NP-complete, and \LBAhalt is \PSPACE-complete.

In the problems $\IdealMem_\KK$ and $\AlgMem_\KK$, the input consists of encoded polynomials over $\KK$, for example as a string of characters. Choosing a reasonable representation, the input length is bounded below by the number of terms, the number of variables occurring, and, depending on the encoding, the coefficients and the (logarithm of the) degree.

\begin{definition}
The encoding of a polynomial $f = \sum_{|\alpha|\leq d} c_\alpha x^\alpha$ is \emph{dense} if it lists all monomials with their coefficients $(\alpha,c_\alpha)$ up to its degree, and \emph{sparse} if it only lists those terms with non-zero coeffient. The representation of a monomial $x^\alpha \mathbin{\hat=} (\alpha_1,\dots,\alpha_n) \in \NN^n$ is in \emph{binary} if each $\alpha_i$ is encoded in binary, and \emph{unary} if it is encoded in unary.
\end{definition}
\begin{example}
The term $f=42x_1^{3}x_3^6 \in \QQ[x_1,x_2,x_3]$ could be encoded with binary exponents as $((11,0,110),\texttt{enc}(42))$, and with unary as $((111,0,111111),\texttt{enc}(42))$. In dense encoding, a representation of $f$ would have to list all $\binom{9+3}{3} = 220$ monomials up to degree $9$ (which most have coefficient $0$), while a sparse encoding only has to list that one term.
\end{example}

All our complexity results will hold with respect to both dense and sparse encodings and binary or unary monomial representation (with the exception of \Cref{thm:Monomial_bounded}), as long as we make the reasonable assumption that the field $\KK$ is \emph{well-endowed} \cite{Borodin1983}. Commonly used fields in computer algebra such as number fields and finite fields are well-endowed. For this reason, we will not elaborate further on the encoding.

%%%%%%%%%%%%%%%%%%%%%%%%%%%%%%%%%%%%%%%%%%%%%%%%%%%%%%%%%%%%%%%%%%%%%%%%
\subsection{The ideal world: brief summary of results}\label{sec:ideal_world}
%%%%%%%%%%%%%%%%%%%%%%%%%%%%%%%%%%%%%%%%%%%%%%%%%%%%%%%%%%%%%%%%%%%%%%%%

In this section, we give an overview of the complexity results regarding ideal membership, representation degree, and Gröbner basis degrees. Our upper bounds on $\AlgMem_\KK$ will rely on refined representation and Gröbner basis degree upper bounds presented below.
A more comprehensive discussion of the complexity of ideals can be found in \cite{Mayr2017}.

\begin{theorem}\label{thm:idealmem_expspace}
Let $\KK$ be a well-endowed field (the lower bounds do not require this).
\begin{enumerate}
\item \textup{(Mayr \& Mayer \cite{Mayr1982}, Mayr \cite{Mayr1989})} The problem $\emph{\IdealMem}_\KK$ is complete for \EXPSPACE. A representation can be enumerated in exponential working space.
\item \textup{(Mayr \cite{Mayr1997})} The problem $\emph{\IdealMem}_\KK(\Homog)$ is complete for \PSPACE. The general problem is also in $\PSPACE$ when bounding the number of variables.
\end{enumerate}
All hardness results already hold for ideals generated by binomials $x^\alpha-x^\beta$.
\end{theorem}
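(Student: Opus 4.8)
The plan is to prove the four assertions on a common technical basis: a doubly-exponential bound on the representation degree, combined with the fact that over a well-endowed field the solvability of a $\KK$-linear system can be decided in space polylogarithmic in its dimensions \cite{Borodin1983}. For the \EXPSPACE upper bound in (i) I would first invoke the classical degree bound: if all $f_i$ and $g$ have degree at most $d$ and $g \in \langle f_1,\dots,f_s\rangle$, then a representation $g = \sum_i h_i f_i$ exists with $\deg(h_i f_i) \le D$ for some $D \le d^{2^{\Landau(n)}}$. Membership is then equivalent to the solvability of the $\KK$-linear system expressing $g$ as a $\KK$-combination of the shifted generators $x^\alpha f_i$ with $|\alpha| \le D - \deg f_i$. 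This system has doubly-exponentially many rows and columns, but every entry is a coefficient of an input polynomial and can be recomputed on demand in small space; deciding its solvability in space polylogarithmic in the (doubly-exponential) dimensions costs exponential working space, which is exactly \EXPSPACE. Producing an actual representation reduces to computing one particular solution coefficient by coefficient within the same bound, giving the enumeration claim \cite{Mayr1989}; well-endowedness is used here to keep the field arithmetic within budget.

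For (ii) the two \PSPACE upper bounds come from shrinking this linear system. In the homogeneous case I would use the grading: passing to homogeneous components, a homogeneous $g$ of degree $D_0$ lies in $\langle f_1,\dots,f_s\rangle$ if and only if it lies in the $\KK$-span of those $x^\alpha f_i$ with $|\alpha| + \deg f_i = D_0$. This is a single graded linear-algebra problem of dimension $\binom{D_0+n}{n}$, which is at most singly exponential in the input size, so the polylogarithmic-space routine now runs in polynomial space; notably no doubly-exponential degree bound is needed here. When the number of variables $n$ is bounded, the exponent $2^{\Landau(n)}$ in the degree bound becomes a constant, so $D$ is only singly exponential, the associated linear system is exponential in size, and the same routine again yields \PSPACE.

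The lower bounds, which do not require well-endowedness, I would route through commutative semigroups, and this is where the real difficulty lies. A binomial ideal $I = \langle x^{\alpha_1}-x^{\beta_1},\dots\rangle$ presents a commutative rewriting system on $\NN^n$, and $x^u - x^v \in I$ holds exactly when $u$ and $v$ are equal in the associated commutative semigroup; thus binomial ideal membership is precisely the word problem for commutative semigroups. The hard, and genuinely deep, step is the \EXPSPACE-hardness of this word problem, established by Mayr and Meyer \cite{Mayr1982} through an exponential-space lower bound for reachability in vector addition systems, with the \PSPACE-hard refinements under homogeneity and bounded variables due to Mayr \cite{Mayr1997}. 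I would present each reduction as a log-space map sending an instance of the relevant machine problem to a binomial membership instance, so that hardness transfers directly and, since only binomials are used, proves the final sentence. Combined with the upper bounds above, this yields completeness in each case.
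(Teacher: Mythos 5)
This is a background theorem that the paper quotes from the literature without proof, so there is no internal argument to compare against; your sketch correctly reconstructs the standard proofs of the cited papers --- a Hermann-type representation degree bound reducing membership to a (doubly-exponential) linear system whose solvability is decided in polylogarithmic space over a well-endowed field, the grading and fixed-$n$ specializations that shrink this system to single-exponential size for the two \PSPACE cases, and the Mayr--Meyer commutative-semigroup simulation for the binomial lower bounds --- which are exactly the ingredients the paper itself invokes in \Cref{thm:hermann_bound}, \Cref{sec:upper_bounds} and \Cref{sec:lower_bounds}. The one slip is attributing a \PSPACE-hardness ``refinement'' to the bounded-variable case: the statement (and Mayr's result) claims only \PSPACE \emph{membership} there, with no matching lower bound asserted, though this does not affect what you need to prove.
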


\begin{remark}\label{rmk:double_exponential}
Note that this uses the convention that in a computational problem with output, we only consider the working space and not the space needed to store the entire output. For example, one can enumerate the binary numbers $0,1,\dots,2^n-1$ with only $\Landau(n)$ bits of working memory. Similarly, here the number of terms in a representation $(h_i)_i$ may be doubly exponential in the input length; this is unavoidable in the worst case \cite{Mayr1982}.
\end{remark}

Representation degree bounds are, in general, doubly exponential in the number of variables, which is asymptotically optimal. We also note the special case of complete intersections for later reference. For a definition of complete intersections see \eg \cite[Exercise 9.4.8]{Cox2016} or \cite[Definition 3.2.23]{Kreuzer2000}.

\begin{theorem}\label{thm:hermann_bound}
Let $g \in \langle f_1,\dots,f_s \rangle \subseteq \KK[x_1,\dots,x_n]$, $d \coloneqq \max_i \deg f_i$, and write $g = \sum_{i=1}^s h_if_i$ with $h_i \in K[\x]$ of minimal representation degree $D \coloneqq \max_i \deg h_i$.
\begin{enumerate}
\item \textup{(Hermann \cite{Hermann1926}, Mayr \& Meyer \cite{Mayr1982})} $D \leq \deg g + (ds)^{2^n}$.
\item \textup{(Dickenstein, Fitchas, Giusti \& Sessa \cite{Dickenstein1991})} If $f_1,\dots,f_s$ define a complete intersection ideal, then $D \leq \deg g + d^s$.
\end{enumerate}
\end{theorem}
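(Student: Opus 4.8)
The plan is to derive both bounds from effective linear algebra over $\KK[\x]$, using induction on the number of variables $n$ for part (i) and the syzygy structure of a regular sequence for part (ii). For part (i) the natural route is to prove a degree bound for a \emph{general} linear system and then specialize. Consider unknowns $y_1,\dots,y_r$ and a system $\sum_{j=1}^r a_{ij} y_j = b_i$ ($1 \le i \le m$) with all $a_{ij}, b_i \in \KK[\x]$ of degree $\le d$. The claim to establish is: if this system is solvable over $\KK[\x]$, then it admits a solution with each $\deg y_j \le \max_i \deg b_i + (rd)^{2^n}$. Ideal membership $g = \sum_i h_i f_i$ is exactly the instance $m = 1$, $r = s$, $a_{1i} = f_i$, $b_1 = g$, so that $\max_i \deg b_i = \deg g$ and the bound becomes the asserted $\deg g + (ds)^{2^n}$; the additive appearance of $\deg g$ is obtained by carrying the $\max_i \deg b_i$ term separately through the induction.

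I would prove the linear-systems claim by induction on $n$, the base case $n=0$ being ordinary Gaussian elimination over $\KK$. For the inductive step I would first control $\deg_{x_n}$ and then the remaining variables. Passing to the fraction field $F \coloneqq \KK(x_1,\dots,x_{n-1})$, the ring $F[x_n]$ is a PID, and a particular solution over $F[x_n]$ can be written via Cramer's rule as ratios of minors of the coefficient matrix; since each minor is a determinant of a submatrix with $x_n$-degree $\le d$, one gets a representative of $x_n$-degree $N = \Landau(rd)$, which after clearing denominators bounds $\deg_{x_n}$ of some $\KK[\x]$-solution. Writing each $y_j = \sum_{k=0}^{N} y_{j,k} x_n^k$ and comparing coefficients of powers of $x_n$ converts the system into a larger one over $\KK[x_1,\dots,x_{n-1}]$ with coefficients still of degree $\le d$, to which the induction hypothesis applies. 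The main obstacle is the combination of (a) making the determinantal $x_n$-degree estimate sharp enough that $N = \Landau(rd)$, and (b) the bookkeeping that the unknown count grows from $r$ to $\Landau(r^2 d)$, so the induction in $n-1$ variables contributes a bound of order $(r^2 d^2)^{2^{n-1}} = (rd)^{2^n}$; verifying that this squaring of the parameter is exactly what turns the exponent $2^{n-1}$ into $2^n$, and nothing worse, is the delicate accounting at the heart of the proof.

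For part (ii) the improved single-exponential bound reflects the rigidity of a complete intersection. When $f_1,\dots,f_s$ is a regular sequence, the Koszul complex on them is a free resolution of $\KK[\x]/\langle f_1,\dots,f_s\rangle$, so the module of syzygies is generated by the Koszul relations $f_i e_j - f_j e_i$ of degree $\le d$, and any two representations of $g$ differ by a combination of these. I would exploit this by induction on $s$: working modulo $\langle f_1,\dots,f_{s-1}\rangle$, on which $f_s$ acts as a non-zero-divisor, one can normalize the coefficient $h_s$ and transfer its excess into the first $s-1$ coefficients, each such reduction costing a factor of at most $d$ in the degree budget, so that the $s$ steps accumulate to the factor $d^s$. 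Equivalently, after a Noether normalization the complete-intersection quotient becomes free over a polynomial subring with a monomial basis of size $d^s$, and this basis bounds the normal forms and hence the representation degree. The crux is establishing that the regular-sequence hypothesis makes each degree-reduction step both possible and cheap — this is precisely where the complete-intersection assumption is used in place of the generic determinantal estimate of part (i).
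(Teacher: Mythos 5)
The paper does not actually prove \Cref{thm:hermann_bound}; it is quoted from Hermann, Mayr--Meyer, and Dickenstein--Fitchas--Giusti--Sessa with citations only, so your sketch has to be measured against those classical arguments. For part (i) your architecture --- induction on $n$ via general linear systems $Ay=b$, bounding $\deg_{x_n}$ of a solution first, expanding in powers of $x_n$ to recurse in $n-1$ variables, and the parameter accounting $r \mapsto \Landau(r^2d)$ unknowns giving $(r^2d^2)^{2^{n-1}} = (rd)^{2^n}$ --- is exactly Hermann's proof. The gap is in the one step you wave at: ``a particular solution over $F[x_n]$ \dots\ which after clearing denominators bounds $\deg_{x_n}$ of some $\KK[\x]$-solution.'' Cramer's rule gives a solution in $F(x_n)$, and clearing its denominator $\Delta$ turns $Ay=b$ into $A(\Delta y)=\Delta b$, a \emph{different} system; it does not produce a polynomial solution of the original one, and solvability over the fraction field is in any case weaker than solvability over $\KK[\x]$. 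Hermann's actual argument runs in the opposite direction: a nonvanishing $\rho\times\rho$ minor $\Delta$ ($\rho$ the generic rank) is used to write down explicit syzygies of the columns of $A$ with $x_n$-degree $\Landau(rd)$, and the polynomial solution that exists \emph{by hypothesis} is then reduced modulo these syzygies to one of bounded $x_n$-degree. Without that reduction lemma the induction never starts.

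For part (ii), the Koszul-complex observation is correct (for a regular sequence the syzygies are generated by the relations $f_ie_j-f_je_i$), but the accounting ``each reduction costs a factor of at most $d$ \dots\ accumulate to the factor $d^s$'' does not parse: the normalization steps you describe change degrees additively, and carried out naively they would yield a bound of shape $\deg g+\Landau(sd)$, which is false in general. The exponent in $d^s$ is the B\'ezout number $\deg\bigl(\KK[\x]/\langle f_1,\dots,f_s\rangle\bigr)\le d_1\dotsm d_s$, and the proof of Dickenstein et al.\ extracts it from the duality/trace formula for the (Gorenstein) complete intersection --- equivalently from the rank $\le d^s$ of the quotient as a free module over a Noether normalization, which is what your closing sentence gestures at. But the connection between that rank and the representation degree (via the trace, or via the degrees of a monomial basis of the free module and the lifting of normal forms) is the entire content of the theorem and is missing. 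In sum: both parts are accurate roadmaps of the standard proofs, with the decisive lemma elided in each.
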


Both from a theoretical and computational point of view, effective degree bounds on \emph{Gröbner bases} are important as well. We recall the classical \emph{Dubé bound} as well as a dimension-refined version. Here, the dimension of an ideal $I \subseteq \KK[\x]$ is the (Krull) dimension of the quotient ring $\KK[\x]/I$, see \cite[Chapter 9]{Cox2016} for other characterizations of the dimension.

\begin{theorem}\label{thm:gb_dim_mayr_ritscher}
Let $I \subseteq \KK[x_1,\dots,x_n]$ be an ideal generated by $f_1,\dots,f_s$ of degree $d_i \coloneqq \deg f_i$, $d_1\geq \dots \geq d_s$, and consider any monomial order.
\begin{enumerate}
\item \textup{(Dubé \cite{Dube1990})} The degree of the reduced Gröbner basis $\textit{GB}$ of $I$, \ie the largest degree of an element of $\textit{GB}$, is bounded by
\[
\deg \textit{GB} \leq 2 \left( \frac{1}{2}d_1^2 + d_1\right)^{2^{n-1}}.
\]
\item \textup{(Mayr \& Ritscher \cite{Mayr2013})} If $I$ has dimension $r$, then the degree is bounded by
\[
\deg \textit{GB} \leq 2 \left( \frac{1}{2} (d_1\dotsm d_{n-r})^{2(n-r)} + d_1\right)^{2^r} \! \leq 2 \left( \frac{1}{2} d_1^{2(n-r)^2} + d_1\right)^{2^r}.
\]
\end{enumerate}
These upper bounds are asymptotically sharp.
\end{theorem}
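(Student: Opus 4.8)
The plan is to prove both bounds by Dubé's method of \emph{cone decompositions}, which converts the degree bound into a combinatorial analysis of the leading-term ideal and its Hilbert function. First I would reduce to the homogeneous case: homogenizing $f_1,\dots,f_s$ with a fresh variable $x_0$ and extending $\prec$ appropriately, the degree of the reduced Gröbner basis of the homogenized ideal dominates $\deg \textit{GB}$ for $I$ (a standard dehomogenization argument, with the dimension bookkeeping adjusted accordingly). In the homogeneous setting the elements of the reduced Gröbner basis are homogeneous, their leading monomials are exactly the minimal generators of the monomial ideal $L \coloneqq \ini_\prec(I)$, and therefore $\deg \textit{GB}$ equals the largest degree of a minimal generator of $L$. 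It thus suffices to bound this purely monomial quantity.

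The central device is a cone decomposition: a partition of a set of monomials into \emph{cones} $s \cdot \KK[U]$ indexed by a monomial $s$ and a subset $U$ of the variables, the \emph{dimension} of the cone being $|U|$. I would build compatible decompositions of $L$ and of the complementary \emph{standard monomials} (which form a $\KK$-basis of $\KK[\x]/I$), and work with \emph{exact}, $d$-standard decompositions, for which the Hilbert function is read off directly from the multiset of cone data and every cone starts in degree at least $d$. The crux is Dubé's combinatorial lemma bounding, in an exact decomposition, the degree at which new cones must appear; fed into Buchberger's algorithm this says that each S-polynomial reduction producing a genuinely new leading term raises the controlling degree only in a way capped by exactness. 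Unwinding the resulting recursion, which has the shape $B(n) \sim \tfrac12 B(n-1)^2$ when passing from $n-1$ to $n$ variables, produces the double-exponential bound $2(\tfrac12 d_1^2 + d_1)^{2^{n-1}}$.

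For the Mayr--Ritscher refinement the extra input is the Krull dimension $r = \dim \KK[\x]/I$, which equals the maximal dimension $|U|$ of a cone appearing in a decomposition of the standard monomials. The idea is to separate the degree growth into two regimes. The $n-r$ \enquote{constrained} directions, matching the codimension, behave like a generic complete intersection, where a Bézout-type estimate bounds the degrees needed to stabilize the Hilbert polynomial by the single-exponential quantity $(d_1\dotsm d_{n-r})^{2(n-r)}$; only the remaining $r$ directions feed the Dubé doubling recursion, so the exponent $2^{n-1}$ is replaced by $2^r$. Combining the regimes yields $2(\tfrac12 (d_1\dotsm d_{n-r})^{2(n-r)} + d_1)^{2^r}$, and the cruder form follows from $d_1 \geq \dots \geq d_s$ via $(d_1\dotsm d_{n-r})^{2(n-r)} \leq d_1^{2(n-r)^2}$. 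As a sanity check, $r = n-1$ recovers Dubé's bound and $r = 0$ gives a single-exponential bound, consistent with the known behaviour of zero-dimensional ideals.

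The main obstacle, and the genuine technical heart of both statements, is setting up the cone-decomposition machinery: proving that exact, $d$-standard decompositions of the prescribed degree exist, and controlling how the degrees of the leading terms produced by Buchberger's algorithm interact with the cone structure. This is where all the combinatorial weight sits, and where the refinement requires tracking the dimension-$r$ cones separately. No softer route through the representation-degree bound \Cref{thm:hermann_bound} can substitute here, since the double-exponential blow-up is genuinely forced, as reflected in the asymptotic sharpness of the bounds.
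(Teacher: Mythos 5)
This theorem is not proved in the paper at all: it is a background result imported verbatim from Dubé \cite{Dube1990} and Mayr \& Ritscher \cite{Mayr2013}, stated with citations so that it can be invoked in \Cref{thm:algmem_bounded_vars}. So there is no in-paper argument to compare yours against; the only fair benchmark is the published proofs, and against those your proposal is an outline rather than a proof. Every load-bearing step --- the existence of exact, $d$-standard cone decompositions of $\ini_\prec(I)$ and of the standard monomials, the degree bounds on the Macaulay constants of such decompositions, and the recursion that actually produces $2(\tfrac12 d_1^2+d_1)^{2^{n-1}}$ --- is named and then deferred ("this is where all the combinatorial weight sits"). Naming the machinery is not the same as running it; as written, nothing is established.

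Beyond incompleteness, one substantive point is mischaracterized. You describe Dubé's combinatorial lemma as something "fed into Buchberger's algorithm," with S-polynomial reductions driving the degree growth. That inverts the point of Dubé's proof: it is deliberately algorithm-free. The bound on the degrees of the minimal generators of the initial ideal comes from comparing Hilbert functions of compatible cone decompositions of $\KK[\x]/I$ and of $\ini_\prec(I)$ via an exact-sequence argument, with no reference to how a Gröbner basis is computed; the doubling recursion lives entirely in the Macaulay constants of the decomposition. The Mayr--Ritscher sketch is closer to the truth (the dimension caps the recursion depth at $2^r$, and the codimension-many "directions" are handled by a Bézout-type single-exponential estimate), but again the key lemma --- that the cone dimensions in the decomposition of the standard monomials are bounded by $r$, and that this actually truncates the recursion --- is asserted, not proved. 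Finally, the statement also claims asymptotic sharpness, which your proposal gestures at but does not address; that requires exhibiting the Mayr--Meyer-type families, a separate construction. If you want to use this theorem the way the paper does, cite it; if you want to prove it, the cone-decomposition lemmas are the entire content and must be written out.
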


%%%%%%%%%%%%%%%%%%%%%%%%%%%%%%%%%%%%%%%%%%%%%%%%%%%%%%%%%%%%%%%%%%%%%%%%
\section{Upper bounds on subalgebra membership}\label{sec:upper_bounds}
%%%%%%%%%%%%%%%%%%%%%%%%%%%%%%%%%%%%%%%%%%%%%%%%%%%%%%%%%%%%%%%%%%%%%%%%

In this section we provide various complexity upper bounds, that is, algorithms for variants of $\AlgMem_\KK$ which place these problems in \PSPACE and \EXPSPACE.
Our complexity analysis of the membership problem for a subalgebra $A = \KK[f_1,\dots,f_s] \subseteq \KK[\x]$ relies on the following classical elimination approach using tag variables, attributed to Spear \cite{Shannon1988, Spear1977}.

Let $\t = \{t_1,\dots,t_s\}$ be additional variables, and consider an \emph{elimination order} on $\KK[\x,\t]$, i.\,e.\ a monomial order such that $x_i \succ t^\alpha$ for all $x_i$ and $t^\alpha \in \KK[\t]$, for example the lexicographical order $\prec_{\text{lex}}$.

\begin{lemma}[Shannon \& Sweedler {\cite{Shannon1988}}]\label{thm:algmem_elimination}
Let $f_1,\dots,f_s,g \in \KK[\x]$, $J\coloneqq \langle f_1-t_1,\dots,f_s-t_s\rangle$, then
\[
g \in \KK[f_1,\dots,f_s] \qquad \text{if and only if} \qquad p \coloneqq \NF_\prec(g,J) \in \KK[\t].
\]
If this is the case, then $g = p(f_1,\dots,f_s)$, interpreting $p$ as a polynomial in $t_1,\dots,t_s$.
\end{lemma}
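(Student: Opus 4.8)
The plan is to realize subalgebra membership as membership in the kernel of an explicit substitution homomorphism, and then to match the normal form against the defining property of the elimination order. Concretely, I would introduce the $\KK$-algebra homomorphism $\phi\colon \KK[\x,\t]\to\KK[\x]$ fixing each $x_i$ and sending $t_j\mapsto f_j$, whose image is exactly $\KK[f_1,\dots,f_s]$ on the $\t$-variables. The first routine step is a \emph{substitution lemma}: for every $h\in\KK[\x,\t]$ one has $h(\x,\t)-h(\x,f_1,\dots,f_s)\in J$, proved by expanding $h$ as a polynomial in $\t$ with coefficients in $\KK[\x]$ and telescoping each difference $t^\alpha-f^\alpha$ (via $ab-a'b'=a(b-b')+(a-a')b'$) into the generators $t_j-f_j$. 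Since $\phi(h)=h(\x,f_1,\dots,f_s)$ and $J\subseteq\ker\phi$ holds trivially, this lemma upgrades to the identity $\ker\phi=J$: any $h\in\ker\phi$ is congruent mod $J$ to $\phi(h)=0$.

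The converse implication of the equivalence, together with the final representation formula, then falls out immediately. If $p\coloneqq\NF_\prec(g,J)\in\KK[\t]$, then $g-p\in J=\ker\phi$, so applying $\phi$ and using $g\in\KK[\x]$ gives $g=\phi(g)=\phi(p)=p(f_1,\dots,f_s)$. This simultaneously witnesses $g\in\KK[f_1,\dots,f_s]$ and certifies the representation asserted in the statement.

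For the forward implication, suppose $g=q(f_1,\dots,f_s)$ with $q\in\KK[\t]$. The substitution lemma yields $g-q(\t)=q(f_1,\dots,f_s)-q(\t)\in J$, so $g$ and the pure $t$-polynomial $q(\t)$ lie in the same coset and hence share a normal form, $\NF_\prec(g,J)=\NF_\prec(q(\t),J)$. It therefore suffices to show that the normal form of an element of $\KK[\t]$ stays in $\KK[\t]$. This is exactly where the elimination order is used: fix a Gröbner basis $G$ of $J$; if $\ini_\prec(g_i)$ is a pure $t$-monomial, then because every monomial containing an $x$-variable exceeds every $t$-monomial, the whole element $g_i$ must already lie in $\KK[\t]$. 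Reducing $q(\t)$ by $G$, every reducible monomial encountered is a $t$-monomial divisible by some $\ini_\prec(g_i)$, which then must itself be a $t$-monomial, forcing $g_i\in\KK[\t]$; subtracting the corresponding multiple keeps the remainder in $\KK[\t]$, so the terminal normal form $p$ lies in $\KK[\t]$.

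The main obstacle is precisely this last point, controlling the shape of the normal form, and it is dissolved entirely by the defining property $x_i\succ t^\alpha$ of the elimination order, which confines Gröbner basis elements with $t$-leading terms to $\KK[\t]$. Everything else is the substitution lemma and the formal $\phi$-computation. One should note only the mild edge case where some $f_j$ is a constant, where $\ini_\prec(f_j-t_j)=-t_j$ is itself a $t$-monomial; this is consistent with the argument above and needs no separate treatment.
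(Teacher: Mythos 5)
Your argument is correct and complete: the identification $J=\ker\phi$ via the telescoping substitution lemma, the evaluation of $\phi$ on $g-\NF_\prec(g,J)$ for the backward direction, and the observation that the elimination order forces any Gröbner basis element of $J$ with leading monomial in $\KK[\t]$ to lie entirely in $\KK[\t]$ (so that reduction of a $\t$-polynomial stays in $\KK[\t]$) together constitute the standard proof of this lemma. The paper itself states the result with a citation to Shannon \& Sweedler and gives no proof, and your write-up is essentially the argument found there, so there is nothing to flag.
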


Subalgebra membership can thus be reduced to the task of calculating the normal form of a polynomial against an ideal in a larger polynomial ring. This approach is well-known, for example as a fall-back method in the \texttt{Macaulay2} \cite{M2} package \texttt{SubalgebraBases} \cite{Burr2024}.
Kühnle \& Mayr \cite{Kühnle1996} describe an exponential space algorithm for enumerating the normal form of a polynomial over a well-endowed field. Combining these results gives the first complexity upper bound:

\begin{theorem}\label{thm:algmem_expspace}
For any well-endowed field $\KK$, $\emph{\AlgMem}_\KK$ is in \EXPSPACE. Moreover a certificate $p \in \KK[t_1,\dots,t_s]$ such that $g = p(f_1,\dots,f_s)$ can be output in exponential working space.
\end{theorem}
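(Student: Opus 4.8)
The plan is to combine the two cited ingredients supplied earlier in the excerpt. By \Cref{thm:algmem_elimination}, deciding whether $g \in \KK[f_1,\dots,f_s]$ is equivalent to forming the ideal $J = \langle f_1-t_1,\dots,f_s-t_s\rangle \subseteq \KK[\x,\t]$ under an elimination order and checking whether the normal form $p \coloneqq \NF_\prec(g,J)$ lies in $\KK[\t]$, i.e.\ involves no $x_i$. Moreover, when this holds, the normal form itself \emph{is} the desired certificate $p$, read as a polynomial in $t_1,\dots,t_s$. So the entire problem reduces to computing (or at least streaming out) a single normal form in the enlarged ring on $n+s$ variables.

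First I would set up the reduction carefully so the instance fed to the normal-form computation is itself produced in the right resources: given $f_1,\dots,f_s,g$, the generators $f_i - t_i$ of $J$ are obtained by a purely syntactic transformation (append one new variable per generator and subtract), which is computable in logarithmic space and blows up the input only polynomially. The number of variables grows from $n$ to $n+s$, still polynomial in the input length. Next I would invoke the Kühnle--Mayr algorithm \cite{Kühnle1996}, which over a well-endowed field enumerates the normal form of a polynomial against an ideal in exponential working space. Running it on $g$ against $J$ produces the terms of $p = \NF_\prec(g,J)$ one at a time within exponential space. To decide membership, I scan these terms and accept precisely when none of them contains any $x_i$; this test can be performed on the fly, term by term, without storing the whole (possibly doubly exponentially large) normal form, consistent with the working-space convention in \Cref{rmk:double_exponential}. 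For the ``moreover'' clause, the same enumerated terms, restricted to the $\t$-variables in the accepting case, are exactly the monomials of the certificate $p$, so the certificate is output in exponential working space at no extra asymptotic cost.

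The main obstacle is ensuring that all the parameters that enter the Kühnle--Mayr space bound remain exponential after the reduction. Their bound is exponential in an input length that depends on the number of variables, generator degrees, and coefficient sizes; since the transformation $f_i \mapsto f_i - t_i$ keeps every generator's degree at $\max(\deg f_i, 1)$, adds only $s$ variables, and introduces only the coefficient $-1$, none of these quantities grows super-polynomially, so the overall working space stays exponential in the original input length. A secondary point to verify is that the elimination order on $\KK[\x,\t]$ (for instance $\prec_{\text{lex}}$ with $x_1 > \dots > x_n > t_1 > \dots > t_s$) is a legitimate monomial order satisfying the $x_i \succ t^\alpha$ condition demanded by \Cref{thm:algmem_elimination}, and that the normal-form algorithm accepts an arbitrary monomial order, or at least this one. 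Finally, I would note the well-endowedness hypothesis is used exactly where \cite{Kühnle1996} needs it, namely to bound the space for field arithmetic on the coefficients that arise during reduction; with that in place the membership test and the certificate enumeration both fit in exponential working space, establishing the theorem.
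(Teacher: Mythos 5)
Your proposal is correct and follows exactly the paper's argument: reduce to computing $\NF_\prec(g,J)$ for $J = \langle f_1-t_1,\dots,f_s-t_s\rangle$ via \Cref{thm:algmem_elimination}, run the Kühnle--Mayr exponential-space normal-form enumeration, and check term by term for occurrences of the $x_i$ without storing the (possibly doubly exponential) normal form. Your additional bookkeeping that the reduction only adds $s$ variables and does not inflate degrees or coefficients is a welcome elaboration of what the paper leaves implicit.
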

\begin{proof}
Given $f_1,\dots,f_s,g$, the algorithm computes the normal form $p \coloneqq \NF_\prec(g,J) \in \KK[\x,\t]$ using Kühnle \& Mayr's algorithm \cite{Kühnle1996}. If there is a nonzero term in $p$ involving some $x_i$, then $g \notin \KK[f_1,\dots,f_s]$. Otherwise, $p$ is a certificate of subalgebra membership by \Cref{thm:algmem_elimination}, which can be enumerated to the output tape using single exponential working space.

Note that it might not be possible to fit the normal form in exponential working space as mentioned in \Cref{rmk:double_exponential}. Instead one has to enumerate the terms of $p$, which fit in exponential working space using and check for occurrences of $x_i$ individually.
\end{proof}

To prove degree bounds and better complexity upper bounds for special classes of polynomials, we outline part of Kühnle \& Mayr's upper bound constructions on normal forms \cite{Kühnle1996}.

Let $I \subseteq \KK[x_1,\dots,x_n]$ be an ideal, $g \in \KK[\x]$ and ${\prec}\coloneqq {\prec_{\text{lex}}}$ the lexicographic order (for simplicity). Let $\mathit{GB} = \mathit{GB}(I)$ be the reduced Gröbner basis of $I$, then 
\[
\deg \NF_{\prec}(g,I) \leq \deg(g) + (\deg(\mathit{GB})+1)^{n^2+1} \deg(g)^n.
\]
Let $D$ be an upper bound on the degree of the coefficients $h_i \in K[\x]$ in a representation $g-\NF_{\prec}(g,I) = \sum_{i=1}^s h_if_i$, for example the Hermann bound from \Cref{thm:hermann_bound}, then Kühnle \& Mayr reduce the normal form calculation into a linear algebra problem of size $\operatorname{poly}(D) = D^{\Landau(1)}$. Using efficient parallel algorithms for matrix rank and the parallel computation hypothesis, this yields an algorithm in space $\operatorname{polylog}(D) = (\log D)^{\Landau (1)}$. Using the Dubé bound for $\deg(\mathit{GB})$ and the Hermann bound for $D$ yields the mentioned exponential space algorithm for normal form calculation.

In our specialized setting $J = \langle t_1-f_1,\dots,t_s-f_s \rangle$ we have more refined upper bounds available regarding representation and Gröbner basis degrees. We apply this to the case of a fixed number of variables $n$ and to the case of a fixed subalgebra $A$.

\begin{theorem}\label{thm:algmem_bounded_vars}
For a fixed subalgebra $A = \KK[f_1,\dots,f_s] \subseteq \KK[\x]$ over a well-endowed field, the membership problem is in \PSPACE (with respect to the input length of $g$). In fact, for a \emph{fixed} number of variables $n$, we have $\emph{\AlgMem}_\KK(\KK[x_1,\dots,x_n]) \in \PSPACE$.
\end{theorem}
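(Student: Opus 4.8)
The plan is to instantiate the tag-variable reduction of \Cref{thm:algmem_elimination} and then exploit the special structure of the ideal $J = \langle t_1 - f_1, \dots, t_s - f_s\rangle \subseteq \KK[\x,\t]$. First I would observe that the $\KK$-algebra homomorphism $\KK[\x,\t] \to \KK[\x]$ that fixes the $x_i$ and sends $t_i \mapsto f_i$ is surjective with kernel exactly $J$, so $\KK[\x,\t]/J \cong \KK[\x]$. Hence, inside the polynomial ring $\KK[\x,\t]$ in $N \coloneqq n+s$ variables, $J$ is a prime ideal with Krull dimension $r = n$, so its codimension is $N - r = s$; since $J$ is generated by $s$ elements, it is a \emph{complete intersection} of dimension exactly $n$. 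This structural fact drives the whole argument, because it lets me replace the generic degree bounds (doubly exponential in the number of variables) by the dimension- and codimension-refined versions.

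Next I would assemble singly exponential degree bounds. For the reduced Gröbner basis of $J$ I apply the Mayr--Ritscher bound \Cref{thm:gb_dim_mayr_ritscher}(ii) with variable count $N = n+s$, Krull dimension $r = n$, and all generator degrees at most $d \coloneqq \max_i \deg f_i$; since $N - r = s$, this gives
\[
\deg \mathit{GB}(J) \;\leq\; 2\Bigl(\tfrac{1}{2}\, d^{2s^2} + d\Bigr)^{2^n}.
\]
For a \emph{fixed} $n$ the exponent $2^n$ is a constant, so $\log \deg \mathit{GB}(J)$ is polynomial in $s$ and $\log d$, hence polynomial in the input length. Feeding this into the normal-form degree estimate quoted before the theorem (with $n$ replaced by $N$) keeps $\log \deg \NF_\prec(g,J)$ polynomial, because $N = n+s$ enters only polynomially in the exponents there. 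Finally, for a bound $D$ on the representation degree of $g - \NF_\prec(g,J) = \sum_i h_i(t_i - f_i) \in J$, I would use the complete intersection case \Cref{thm:hermann_bound}(ii), giving $D \leq \deg\bigl(g - \NF_\prec(g,J)\bigr) + d^s$ in place of the generic $(ds)^{2^N}$; this is again singly exponential, so $\log D$ stays polynomial in the input length.

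With these bounds in hand I would invoke Kühnle \& Mayr's normal-form algorithm \cite{Kühnle1996}, which over a well-endowed field reduces the computation to linear algebra of size $\operatorname{poly}(D)$ and therefore runs in space $\operatorname{polylog}(D)$. Since $\log D$ is polynomial in the input length, so is $\operatorname{polylog}(D)$, placing $\AlgMem_\KK(\KK[x_1,\dots,x_n])$ in \PSPACE; deciding whether the resulting normal form lies in $\KK[\t]$ is just a scan of its terms for occurrences of some $x_i$, which is cheap. The statement for a fixed subalgebra $A$ then follows at once: there $n$, $s$, $d$ and the $f_i$ are all constants, so $\deg \mathit{GB}(J)$ is a constant and the degree and space bounds become polynomial --- in fact polylogarithmic in $\deg g$ --- so even the general Dubé and Hermann bounds would suffice in that regime.

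The step I expect to be most delicate is the bookkeeping in the fixed-$n$ case, where $s$ occurs simultaneously as the number of generators, as the codimension $N - r = s$, and inside the variable count $N = n+s$; I must verify that every such occurrence contributes only polynomially to $\log D$, while the one genuinely dangerous factor, the tower $2^r = 2^n$, is rendered constant precisely by fixing $n$. Note that the complete intersection property and the equality $\dim \KK[\x,\t]/J = n$ are independent of the monomial order, so the only order-sensitive ingredient is the normal-form degree estimate, which is available for the lexicographic elimination order used throughout; there it remains to confirm that replacing $n$ by $N$ introduces no dependence beyond the one written.
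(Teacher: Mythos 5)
Your proposal is correct and follows essentially the same route as the paper: reduce via \Cref{thm:algmem_elimination}, observe that $J$ is a complete intersection of dimension $n$ in $n+s$ variables, combine the Mayr--Ritscher bound with the complete-intersection case of \Cref{thm:hermann_bound} to get a singly exponential $D$, and run the K\"uhnle--Mayr algorithm in space $\operatorname{polylog}(D)$. The only (harmless) deviations are that you justify the complete-intersection claim explicitly and derive the fixed-subalgebra case from the fixed-$n$ analysis, whereas the paper cites K\"uhnle \& Mayr's remark on fixed ideals directly.
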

\begin{proof}
The algorithm is the same as in \Cref{thm:algmem_expspace}, only the complexity analysis is slightly more elaborate. Using that normal form calculation with respect to a fixed ideal is possible in polynomial space, as remarked by Kühnle \& Mayr \cite[Section 6]{Kühnle1996}, the first assertion follows.

If only the number of variables $x_1,\dots,x_n$ is fixed, then more care has to be taken, as the computation takes place in the ring $\KK[x_1,\dots,x_n,t_1,\dots,t_s]$, whose number of variables $n+s$ still depends on the input length. However, the ideal $J = \langle t_1-f_1,\dots,t_s-f_s\rangle$ is a complete intersection of dimension $n$, hence using \Cref{thm:gb_dim_mayr_ritscher} its Gröbner basis degree is bounded above by
\[
\deg \textit{GB}(J) \leq G \coloneqq 2 \left( \frac{1}{2} d^{2s^2} + d\right)^{2^n}.
\]
Furthermore, the representation degree of $g-\NF_\prec(g,J)$ is bounded above by $D \coloneqq G + d^s$ by \Cref{thm:hermann_bound}. Since $n$ is fixed, we see that $D$ is single exponential in the input length, hence $\operatorname{polylog}(D)$ is polynomial. This shows that the Meyer \& Kühnle algorithm works in polynomial space for a bounded number of variables.
\end{proof}

\begin{corollary}\label{cor:algmem_degree_bound}
If $g \in \KK[f_1,\dots,f_s] \subseteq \KK[\x]$, then there exists a certificate $p \in \KK[t_1,\dots,t_s]$ with $g = p(f_1,\dots,f_s)$ of degree
\[
\deg p \leq \deg(g) + \left(\left( \textstyle\frac{1}{2} d^{2s^2} + d\right)^{2^n}+1\right)^{(n+s)^2+1} \deg(g)^{n+s}.
\]
\end{corollary}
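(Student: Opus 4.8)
The plan is to extract an explicit certificate from \Cref{thm:algmem_elimination} and bound its degree. Setting $J \coloneqq \langle t_1-f_1,\dots,t_s-f_s\rangle \subseteq \KK[\x,\t]$, that lemma shows the polynomial $p \coloneqq \NF_\prec(g,J)$ satisfies $g = p(f_1,\dots,f_s)$, so $p$ is a valid certificate (not necessarily of minimal degree, which is all the corollary asks for). The certification degree is the degree of $p$ in the variables $\t$, and since $p \in \KK[\t]$ involves no $x_i$, this coincides with the total degree of $p$ in $\KK[\x,\t]$. Hence it suffices to bound the total degree of the normal form $\NF_\prec(g,J)$.

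For that I would apply the explicit normal-form degree estimate recalled just before \Cref{thm:algmem_bounded_vars}, but in the ring $\KK[\x,\t]$, which has $n+s$ variables rather than $n$. Substituting $n \mapsto n+s$ there gives
\[
\deg p \;\leq\; \deg(g) + \bigl(\deg \mathit{GB}(J)+1\bigr)^{(n+s)^2+1}\,\deg(g)^{n+s},
\]
so the whole task reduces to controlling the reduced Gröbner basis degree $\deg \mathit{GB}(J)$ and substituting.

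The one step that needs care is the same observation driving \Cref{thm:algmem_bounded_vars}: the ideal $J$ is a complete intersection of dimension $n$, generated by $s$ polynomials of degree at most $d$. This lets me invoke the \emph{dimension-refined} Mayr \& Ritscher bound of \Cref{thm:gb_dim_mayr_ritscher} with $r = n$ and ambient dimension $n+s$ --- crucially producing the exponent $2^n$ rather than the far larger $2^{n+s-1}$ one would get from the plain Dubé bound --- so that $\deg \mathit{GB}(J) \leq 2\bigl(\tfrac12 d^{2s^2}+d\bigr)^{2^n}$. Plugging this into the displayed estimate and simplifying the leading constant then yields the asserted bound. I do not anticipate a genuine obstacle here: the content lies entirely in assembling \Cref{thm:algmem_elimination}, the normal-form estimate, and the refined Gröbner basis bound, together with the bookkeeping of exponents and the leading constant.
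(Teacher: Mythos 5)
Your proposal is correct and follows essentially the same route as the paper: the certificate is $\NF_\prec(g,J)$ from \Cref{thm:algmem_elimination}, its degree is bounded by the Kühnle--Mayr normal-form estimate in the $(n+s)$-variable ring, and $\deg\mathit{GB}(J)$ is controlled by the Mayr--Ritscher bound using that $J$ is a complete intersection of dimension $n$, exactly as in the proof of \Cref{thm:algmem_bounded_vars}. The only wrinkle is the leading factor $2$ in the Mayr--Ritscher bound, which the stated corollary silently drops; your remark about ``simplifying the leading constant'' covers the same slack that the paper itself glosses over.
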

\begin{proof}
This is the degree bound on the normal form by Kühnle \& Mayr evaluated for the ideal $J$ (note that the ambient polynomial ring has $n+s$ variables).
\end{proof}
\begin{remark}
This degree can probably be improved by studying the derivation of the normal form degree bounds in this particular situation. We suspect that a \enquote{nicer} upper bound akin to the Hermann bound (\Cref{thm:hermann_bound}) should hold.
\end{remark}

Of great interest is the case of subalgebras generated by homogeneous polynomials. In this case, $A$ is graded compatibly with the standard grading of the ambient ring.

\begin{theorem}\label{thm:algmem_homog}
For any well-endowed field $\KK$, $\emph{\AlgMem}_\KK(\Homog)$ is in \PSPACE.
\end{theorem}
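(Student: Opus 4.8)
The plan is to run the same normal-form algorithm as in \Cref{thm:algmem_expspace} and \Cref{thm:algmem_bounded_vars}, but to exploit homogeneity to replace the doubly-exponential degree bounds by a bound that is merely linear in $\deg g$. Discarding any constant generators (which already lie in $\KK \subseteq A$), I may assume each $f_i$ is homogeneous of degree $d_i \coloneqq \deg f_i \geq 1$, and then equip $\KK[\x,\t]$ with the positive weighting $\deg x_j = 1$, $\deg t_i = d_i$. Under this grading each generator $t_i - f_i$ of $J = \langle t_1-f_1,\dots,t_s-f_s\rangle$ is homogeneous of weighted degree $d_i$, so $J$ is a (weighted-)homogeneous ideal, while $g$ is homogeneous of weighted degree $\delta \coloneqq \deg g$.

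The key observation is that homogeneity forces the relevant degrees down to $\delta$. Since $J$ is homogeneous and the weights are positive, Buchberger's algorithm preserves homogeneity, so the reduced Gröbner basis of $J$ with respect to the lex elimination order consists of weighted-homogeneous elements; consequently the normal form $p \coloneqq \NF_\prec(g,J)$ is itself weighted-homogeneous of degree $\delta$. Writing $g - p = \sum_i h_i(t_i - f_i)$ and extracting weighted-homogeneous components, I may take each $h_i$ weighted-homogeneous of degree $\delta - d_i$, so the representation degree satisfies $D \leq \delta = \deg g$. This replaces the Dubé and Hermann bounds of \Cref{thm:gb_dim_mayr_ritscher} and \Cref{thm:hermann_bound}, which are doubly exponential, by a bound linear in the degree of the input.

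Feeding $D = \deg g$ into Kühnle \& Mayr's reduction of normal-form computation to linear algebra \cite{Kühnle1996}, every monomial that occurs has weighted---and hence standard---degree at most $\delta$. The resulting matrix therefore has size at most the number of monomials of degree $\leq \delta$ in the $n+s$ variables $\x,\t$, namely $\binom{\delta+n+s}{n+s}$. Since $n+s$ is bounded by the input length $\ell$ and $\log \delta = \Landau(\ell)$, this is $2^{\Landau(\ell^2)}$, single exponential in $\ell$; its entries are just coefficients of the given generators $t_i-f_i$ and are trivially space-computable. Running the parallel matrix-rank algorithm over the well-endowed field $\KK$ in space $\operatorname{polylog}$ of the matrix size then costs $\operatorname{poly}(\ell)$ space. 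Finally I would test, by \Cref{thm:algmem_elimination}, whether $p$ involves any $x_i$; it does not precisely when $g \in \KK[f_1,\dots,f_s]$. This places $\emph{\AlgMem}_\KK(\Homog)$ in \PSPACE.

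The main obstacle I expect is making the degree reduction rigorous: one must verify that restricting the normal-form linear system to degree $\leq \delta$ is justified purely by weighted homogeneity, and that positivity of the weights $d_i \geq 1$ (obtained after removing constant generators) is what guarantees both that Gröbner bases remain homogeneous and that only finitely many monomials enter the system. A secondary point is to confirm that, although the number of variables $n+s$ grows with the input, it enters only polynomially inside $\log$ of the single-exponential matrix size, so that $\operatorname{polylog}$ of that size stays polynomial---exactly as in \Cref{thm:algmem_bounded_vars}, but now without needing to fix $n$.
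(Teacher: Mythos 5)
Your proof is correct and follows the same overall strategy as the paper: run the Kühnle--Mayr normal-form algorithm on $J=\langle t_1-f_1,\dots,t_s-f_s\rangle$ and use homogeneity to cut the relevant degree bounds from doubly exponential down to single exponential, so that $\operatorname{polylog}$ of the resulting linear-algebra problem is polynomial in the input length. The one place where you genuinely diverge is the justification of the degree bound: the paper observes that a minimal certificate has degree at most $\deg g$, truncates the normal-form computation at degree $\deg g$, and quotes the complete-intersection representation bound $D=\deg g+d^s$ from \Cref{thm:hermann_bound}, whereas you introduce the weighted grading $\deg x_j=1$, $\deg t_i=d_i$ on $\KK[\x,\t]$, under which $J$ is homogeneous, and deduce directly that the reduced Gr\"obner basis, the normal form, and the representation coefficients are all weighted-homogeneous, yielding the slightly sharper bound $D\le\deg g$ without invoking \Cref{thm:hermann_bound}. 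This is a clean, self-contained alternative, and the two bounds are interchangeable for complexity purposes since both are single exponential. Two small points to tidy up: in $\AlgMem_\KK(\Homog)$ only the generators are required to be homogeneous, so you should first decompose $g$ into its standard-homogeneous components (the normal form is $\KK$-linear, so each component is handled separately and every monomial of $\NF_\prec(g,J)$ still has weighted degree at most $\deg g$); and you should state explicitly that discarding constant generators does not change the subalgebra, which is what guarantees that all weights $d_i$ are positive and hence that only finitely many monomials enter the linear system.
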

\begin{proof}
As a consequence of the grading, we see that a certificate $p$ of lowest degree (if it exists) has degree at most $\deg g$. Thus either $\deg \NF_\prec(g,J) > \deg g$, in which case $g \notin K[f_1,\dots,f_s]$, or $\deg \NF_\prec(g,J) \leq \deg g$ and $g \in \KK[f_1,\dots,f_s]$ if and only if $\NF_\prec(g,J) \in K[\t]$. This leads to a variation of the algorithm of \Cref{thm:algmem_expspace} except we only attempt to compute the normal form up to degree $\deg g$. The complexity analysis is analogous to \Cref{thm:algmem_bounded_vars}, noting that $\deg \NF_\prec(g,J)\leq \deg g$, hence a representation degree bound here is $D \coloneqq \deg g + d^s$.
\end{proof}

\begin{remark}
Degree bounds on certificates for elimination problems have been studied before, for example by Galligo \& Jelonek \cite{Galligo2020elimination}. Their results are not necessarily applicable here, as our ideal $J$ has dimension $n$ in a ring of $n+s$ variables, while the results of Galligo \& Jelonek only apply to intersections of an ideal with $K[x_1,\dots,x_n]$ if the ideal has dimension at least $n+1$.
\end{remark}

%%%%%%%%%%%%%%%%%%%%%%%%%%%%%%%%%%%%%%%%%%%%%%%%%%%%%%%%%%%%%%%%%%%%%%%%
\section{Lower bounds on subalgebra membership}\label{sec:lower_bounds}
%%%%%%%%%%%%%%%%%%%%%%%%%%%%%%%%%%%%%%%%%%%%%%%%%%%%%%%%%%%%%%%%%%%%%%%%

In this section, we prove matching complexity lower bounds for the subalgebra membership problem and its homogeneous variant, relating them to the construction for homogeneous ideals.
The hardness results for $\IdealMem_\KK$ and $\IdealMem_\KK(\Homog)$ were constructed by embedding the word problem for commutative semigroups resp.\ its homogeneous variant into binomial ideals \cite{Mayr1982, Mayr1997}. We now describe how to embed (binomial) ideal membership into subalgebra membership.

\begin{theorem}\label{thm:ideal_to_algebra}
Let $f_1,\dots,f_s,g \in \KK[x_1,\dots,x_n]$ and let $t$ be a new variable. The following are equivalent:
\begin{enumerate}[label=\textup{(\alph*)}]
\item $g \in \langle f_1,\dots,f_s\rangle \subseteq \KK[\x]$;
\item $tg \in \KK[tf_1,\dots,tf_s,x_1,\dots,x_s]$.
\end{enumerate}
\end{theorem}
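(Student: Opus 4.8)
The plan is to exploit the $\NN$-grading on $\KK[x_1,\dots,x_n,t]$ in which $t$ has degree $1$ and every $x_j$ has degree $0$, so that the degree-$d$ component is $t^d\KK[\x]$. With respect to this grading $tg$ and each $tf_i$ are homogeneous of degree $1$, while the $x_j$ are homogeneous of degree $0$; consequently $A \coloneqq \KK[tf_1,\dots,tf_s,x_1,\dots,x_n]$ is a graded subalgebra $A = \bigoplus_{d\ge 0}A_d$ with $A_d = A \cap t^d\KK[\x]$. Since $tg$ is homogeneous of degree $1$, the membership $tg \in A$ is equivalent to $tg \in A_1$, so the whole statement reduces to identifying the single graded piece $A_1$.

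The implication (a) $\Rightarrow$ (b) is immediate: a representation $g = \sum_{i=1}^s h_i f_i$ with $h_i \in \KK[\x]$ gives $tg = \sum_{i=1}^s h_i\,(tf_i)$, and each $h_i$ already lies in $A$ as a polynomial in the generators $x_1,\dots,x_n$, so $tg \in A$. This is the only place the generators $x_1,\dots,x_n$ are used: they let us realize arbitrary coefficients $h_i$ inside $A$.

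For (b) $\Rightarrow$ (a) I would compute $A_1$ directly. A product of the generators has $t$-degree equal to the number of factors drawn from $\{tf_1,\dots,tf_s\}$, so the degree-$1$ component consists precisely of the $\KK$-linear combinations of products using exactly one factor $tf_i$ and arbitrarily many factors $x_j$; that is,
\[
A_1 = t\sum_{i=1}^s f_i\,\KK[\x] = t\,\langle f_1,\dots,f_s\rangle.
\]
Then $tg \in A$ forces $tg \in A_1$, hence $tg = t\sum_i q_i f_i$ for some $q_i \in \KK[\x]$, and cancelling $t$ (valid since $\KK[\x,t]$ is a domain) yields $g \in \langle f_1,\dots,f_s\rangle$. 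I expect the one point requiring care to be this backward direction: rather than tracking a certificate $p$ with $tg = p(tf_1,\dots,tf_s,x_1,\dots,x_n)$ term by term, passing through the graded decomposition lets the comparison of degree-$1$ parts isolate exactly the $\KK[\x]$-linear contribution of the tag generators. The identity $A_1 = t\langle f_1,\dots,f_s\rangle$ is the crux, and it is here that having all of $x_1,\dots,x_n$ among the generators is essential, as otherwise $A_1$ would recover only a submodule of the ideal rather than $\langle f_1,\dots,f_s\rangle$ itself.
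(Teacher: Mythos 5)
Your proof is correct and follows essentially the same route as the paper: both arguments hinge on the $\NN$-grading with $\deg t=\deg(\text{tag generator})=1$ and $\deg x_j=0$ and then isolate the degree-one graded piece, the paper by decomposing a certificate $p=\sum_d p_d$ in the free certificate ring and you by identifying $A_1=t\langle f_1,\dots,f_s\rangle$ directly. The two presentations differ only in packaging, and your observation about the role of the generators $x_1,\dots,x_n$ matches the paper's construction.
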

\begin{proof}
If $g = h_1f_1+\dots+h_sf_s$ with $h_i \in \KK[\x]$, then
\[
tg = h_1\cdot (tf_1)+\dots+h_s\cdot (tf_s),
\]
which certifies that $tg$ is in $\KK[tf_1,\dots,tf_s,\x]$.

Conversely, assume $tg = p(tf_1,\dots,tf_s,x_1,\dots,x_n)$ with $p \in R \coloneqq \KK[u_1,\dots,u_s,x_1,\dots,x_n]$. Declare a (non-standard) grading on $R$ and $S=\KK[t,x_1,\dots,x_n]$ by giving $\deg t = \deg u_j = 1$ and $\deg x_i = 0$, so
\[
R_d \coloneqq \bigoplus_{\substack{\alpha \in \NN^s\\ |\alpha|=d}} u_1^{\alpha_1}\dotsm u_s^{\alpha_s}\KK[x_1,\dots,x_n], \qquad S_d \coloneqq t^d\KK[x_1,\dots,x_n].
\]
The evaluation homomorphism $R\to S$ substituting $x_i \mapsto x_i$, $u_j \mapsto t_jf_j$ respects this grading. Since $tg \in S_1$ is homogeneous of degree $1$, decomposing $p$ into its graded components $p=\sum_d p_d$ ($p_d \in R_d$) reveals that
\[
p_d(tf_1,\dots,tf_s,x_1,\dots,x_n) = \begin{cases}
p(tf_1,\dots,x_n) = tg & \text{if }d=1 \\
0 & \text{if }d\neq 1.
\end{cases}
\]
Thus by replacing $p$ by $p_1$, we may assume that every term in $p$ involves exactly one $u_j$. Setting $t=1$, we read
\[
p(f_1,\dots,f_s,x_1,\dots,x_n) = g,
\]
which, by the structure of $p$, is a representation for $g \in \langle f_1,\dots,f_s\rangle$.
\end{proof}
\begin{corollary}\label{cor:repdeg_to_certdeg}
Follow the notation from the previous theorem and set $A \coloneqq \KK[tf_1,\dots,tf_s,\x]$.
\begin{enumerate}
\item The representation degree of $g \in \langle f_1,\dots,f_s\rangle$ is one less than the certification degree of $tg \in A$.
\item The smallest possible number of non-zero terms of all $h_i$ in a representation $g = h_1f_1+\dots+h_sf_s$ equals the smallest possible number of terms of a certificate for $tg \in A$. 
\end{enumerate}
\end{corollary}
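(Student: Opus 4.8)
The plan is to read both statements off the explicit dictionary between representations and certificates that is already present in the proof of \Cref{thm:ideal_to_algebra}. In one direction, a representation $g = h_1 f_1 + \dots + h_s f_s$ with $h_i \in \KK[\x]$ is sent to the certificate $p = u_1 h_1 + \dots + u_s h_s \in \KK[u_1,\dots,u_s,x_1,\dots,x_n]$, which evaluates to $tg$ under $u_j \mapsto tf_j$, $x_i \mapsto x_i$. In the other direction, given any certificate $p$ for $tg \in A$, I replace it by its component $p_1$ in the non-standard grading ($\deg u_j = 1$, $\deg x_i = 0$); as established in \Cref{thm:ideal_to_algebra}, $p_1$ is again a certificate and has the form $p_1 = u_1 h_1 + \dots + u_s h_s$, so it reads off a representation $g = \sum_j f_j h_j$. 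Both parts then reduce to tracking how $\deg$ (part (i)) and the number of terms (part (ii)) behave under these two passages, together with one structural observation: since the $u_j$ are distinct variables, the monomials occurring in $u_1 h_1, \dots, u_s h_s$ are pairwise distinct, so \emph{no cancellation} occurs when forming $p = \sum_j u_j h_j$.

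For part (i), the no-cancellation observation gives $\deg p = \max_j \deg(u_j h_j) = 1 + \max_j \deg h_j$ exactly, so a representation of degree $D$ produces a certificate of degree $D+1$; hence the certification degree of $tg$ is at most one more than the representation degree of $g$. For the reverse inequality I start from a certificate $p$ of minimal degree and pass to $p_1$; crucially, the monomials of $p_1$ form a subset of those of $p$, so $\deg p_1 \leq \deg p$, and writing $p_1 = \sum_j u_j h_j$ yields a representation with $\max_j \deg h_j = \deg p_1 - 1 \leq \deg p - 1$. Combining the two inequalities gives the claimed equality.

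Part (ii) runs along the same lines with the degree replaced by the term count. The no-cancellation observation now reads $\#\,\mathrm{terms}(p) = \sum_j \#\,\mathrm{terms}(u_j h_j) = \sum_j \#\,\mathrm{terms}(h_j)$, so a representation and its associated certificate have the same total number of terms, giving one inequality. For the other, passing from a minimal-term certificate $p$ to $p_1$ can only delete monomials, so $\#\,\mathrm{terms}(p_1) \leq \#\,\mathrm{terms}(p)$, and $p_1 = \sum_j u_j h_j$ produces a representation with $\sum_j \#\,\mathrm{terms}(h_j) = \#\,\mathrm{terms}(p_1)$. Equality follows.

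The only genuinely delicate point, and the one I would state carefully, is that both reverse inequalities rest on replacing an arbitrary optimal certificate by its graded piece $p_1$ without increasing the quantity being measured; this works precisely because $p_1$ is obtained from $p$ by discarding monomials, so it dominates neither the degree nor the term count of $p$. Everything else is bookkeeping made exact by the fact that multiplication by the distinct tag variables $u_j$ introduces no collisions among monomials.
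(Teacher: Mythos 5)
Your proof is correct and follows essentially the same route as the paper: map a representation $g=\sum_j h_jf_j$ to the certificate $\sum_j u_jh_j$ for one inequality, and for the other pass from an optimal certificate $p$ to its graded piece $p_1$ (which only discards monomials) and read off a representation. The explicit no-cancellation observation for the monomials of $\sum_j u_jh_j$ is a welcome detail that the paper leaves implicit.
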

\begin{proof}
The forward direction in the previous proof gives that the total degree and number of terms for representation of ideal membership are a lower bound for the respective values for the subalgebra (the total degree increases by $1$ by attaching the $u_j$ to the $h_j$). Conversely, the grading argument shows that if $p$ is a certificate, then so is its graded component $p_1$, which has a lower or equal degree and number of terms. Finally, setting $t\mapsto 1$ in $p_1$ gives a representation for $g \in \langle f_1,\dots,f_s\rangle$, showing the desired equalities.
\end{proof}

With this result we can deduce matching complexity lower bounds for subalgebra membership and its homogeneous variant. We can rephrase \Cref{thm:ideal_to_algebra} as a complexity-theoretic reduction.

\begin{theorem}\label{thm:reduction_ideal_algebra}
The map
\[
(f_1,\dots,f_s;g \in \KK[\x]) \;\mapsto\; (tf_1,\dots,tf_s,x_1,\dots,x_n;tg\in\KK[t,\x])
\]
provides a reduction $\emph{\IdealMem}_\KK \leqlm \emph{\AlgMem}_\KK$. More generally, it provides a reduction $\emph{\IdealMem}_\KK(\textup{\textsf C}) \leqlm \emph{\AlgMem}_\KK(\textup{\textsf C})$ for any class of polynomials \textup{\textsf C} containing single variables and closed under multiplication with variables.
\end{theorem}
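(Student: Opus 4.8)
The plan is to verify the two defining properties of a log-space many-one reduction --- semantic correctness of the biconditional and log-space computability of the syntactic map --- and then to check the two closure hypotheses needed for the class-refined statement. The bulk of the mathematical work is already done: correctness is nothing but \Cref{thm:ideal_to_algebra}, which, with $A \coloneqq \KK[tf_1,\dots,tf_s,x_1,\dots,x_n]$, asserts that $g \in \langle f_1,\dots,f_s\rangle$ holds if and only if $tg \in A$. Hence an instance is a yes-instance of $\IdealMem_\KK$ precisely when its image under the map is a yes-instance of $\AlgMem_\KK$, which is exactly the required equivalence $x \in \IdealMem_\KK \iff f(x) \in \AlgMem_\KK$. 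I would cite this directly rather than reprove it.

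The remaining point I would argue is that the map is computable in logarithmic working space. It performs only two operations: multiplying each $f_i$ and $g$ by the fresh variable $t$, and appending the generators $x_1,\dots,x_n$. Multiplication by $t$ is a local rewrite of the encoding --- in the exponent vector of every term one adjoins a $t$-coordinate equal to $1$ and copies all remaining data verbatim --- so it can be carried out by a single streaming pass over the input that keeps only a pointer and $\Landau(1)$ auxiliary symbols. To emit $x_1,\dots,x_n$ I would run a counter from $1$ to $n$; since $n$ is at most the input length, this counter fits in $\Landau(\log n)$ bits, and $n$ can be determined from the input in logarithmic space. Because these steps never buffer the (possibly much larger) output, the transformation stays within logarithmic working space under the standard sparse/dense and unary/binary conventions.

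Finally, for the refinement $\IdealMem_\KK(\textsf{C}) \leqlm \AlgMem_\KK(\textsf{C})$ I would simply check that the new generators lie in $\textsf{C}$: the $tf_i$ arise from $f_i \in \textsf{C}$ by multiplication with a variable, hence lie in $\textsf{C}$ by the assumed closure, while the $x_j$ lie in $\textsf{C}$ because $\textsf{C}$ is assumed to contain single variables; the query element $tg$ may be any element of $\KK[t,\x]$, matching exactly what $\AlgMem_\KK(\textsf{C})$ permits, so no condition on $g$ is required. The only genuinely delicate step is the resource bound in the middle paragraph: one must ensure that tagging every term with $t$ and writing out $n$ fresh generators is organized as a streaming computation, so that the potentially large output is never stored in working memory. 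Everything else is immediate once \Cref{thm:ideal_to_algebra} is in hand.
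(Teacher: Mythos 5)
Your proposal is correct and matches the paper's treatment: the paper gives no separate proof for this theorem, presenting it as a direct rephrasing of \Cref{thm:ideal_to_algebra}, which supplies exactly the biconditional you cite, with the log-space computability of the syntactic map and the closure conditions on $\textsf{C}$ left as the routine checks you spell out. Your added details (streaming the $t$-tagging so the output is never buffered, and verifying that $tf_i$ and $x_j$ stay in $\textsf{C}$) are exactly the right ones and contain no gaps.
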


With this we can complete the proof of our main theorem.

\begin{proof}[Proof of \Cref{thm:main_result}]
The complexity-theoretic upper bounds have been established in \Cref{thm:algmem_expspace} and \Cref{thm:algmem_homog}. The lower bounds now follow from the hardness of $\IdealMem_\KK$ and its homogeneous variant (\Cref{thm:idealmem_expspace}) in conjunction with the reduction in \Cref{thm:reduction_ideal_algebra}.
\end{proof}

As mentioned in the beginning of this section, the complexity lower bounds for ideal membership were achieved by studying the word problem for commutative semigroups. Let $\mathcal{R} = \{(x^{\alpha_i},x^{\beta_i})\}_{i=1}^s$ be a set of pairs of monomials, called \emph{replacement rules}. Let $\equiv_{\mathcal R}$ be the congruence relation on $\Mon(\x)$ generated by
\[
x^\gamma x^\alpha\equiv_{\mathcal{R}} x^\gamma x^\beta \qquad (x^\alpha,x^\beta) \in \mathcal{R},\ x^\gamma \in \Mon(\x).
\]
Concretely, two monomials $m,m'$ are equivalent if one can be turned into the other by repeatedly replacing a factor $x^\alpha \mid m$ by the corresponding $x^\beta$ or by replacing $x^\beta \mid m$ by $x^\alpha$.

\begin{table}[H]
\centering
\problemBox[\textwidth]{Word problem for Commutative Semigroups, \CSG}
{Replacement rules $\mathcal{R} = \{(x^{\alpha_i}, x^{\beta_i})\}_{i=1}^s$, \newline
Monomials $m,m' \in \Mon(\x)$}
{Is $m \equiv_{\mathcal{R}} m'$?}
\end{table}
Mayr \& Meyer \cite{Mayr1982} prove \EXPSPACE-hardness of \CSG and provide a reduction $\CSG \leqlm \IdealMem_\KK(\textsf{Binom})$ via the map
\[
(\mathcal{R},m,m')\;\mapsto\;(x^{\alpha_1}-x^{\beta_1},\dots,x^{\alpha_s}-x^{\beta_s};m-m').
\]
More precisely, the minimal number of terms in a representation for this ideal membership problem equals the minimal number of replacement steps necessary in showing the congruence $m \equiv_{\mathcal R} m'$. Using the known double-exponential examples from \cite{Mayr1982}, we can prove worst-case certification degree lower bounds for subalgebra membership. For an accessible exposition of the double-exponential lower bounds compare \cite{Bayer1988} or the author's Master's thesis \cite{kayser2022groebnercomplexity}.

\begin{theorem}
For given $k$ there exist polynomials $f_1,\dots,f_s,g \in \KK[x_1,\dots,x_n]$ of degree $\leq 6$, $n = \Landau(k)$, $s = \Landau(k)$, such that $g \in \KK[f_1,\dots,f_s]$, but every certificate $p \in \KK[t_1,\dots,t_s]$ has degree at least $2^{2^n}$ and at least $2^{2^n}$ terms. These $f_i$ and $g$ can be chosen as binomials and monomials.
\end{theorem}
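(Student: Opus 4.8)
The plan is to transport the Mayr \& Meyer double-exponential lower bounds for binomial ideal membership over to subalgebra membership through the reduction of \Cref{cor:repdeg_to_certdeg}.

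First I would invoke the explicit family from \cite{Mayr1982}: for each $k$ there is a \CSG instance with replacement rules $\mathcal R = \{(x^{\alpha_i},x^{\beta_i})\}_{i=1}^s$ and monomials $m,m'$ in $n = \Landau(k)$ variables and $s = \Landau(k)$ rules such that $m \equiv_{\mathcal R} m'$, yet every witnessing sequence of replacement steps has length at least $2^{2^n}$ and passes through monomials of degree at least $2^{2^n}$. Via the reduction $\CSG \leqlm \IdealMem_\KK(\textsf{Binom})$ recalled above, the binomials $f_i \coloneqq x^{\alpha_i}-x^{\beta_i}$ and target $g \coloneqq m-m'$ then satisfy $g \in \langle f_1,\dots,f_s\rangle$, have representation degree at least $2^{2^n}$, and---using the stated equality between replacement steps and the number of terms of a representation---force at least $2^{2^n}$ nonzero terms in total among the $h_i$ of any representation $g = \sum_i h_i f_i$. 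All $f_i$ and $g$ here are binomials of bounded degree, and I would track the constant through the gadget to confirm it is at most $5$.

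Next I apply \Cref{cor:repdeg_to_certdeg} with a fresh variable $t$. Taking the subalgebra generators $tf_1,\dots,tf_s,x_1,\dots,x_n$ and target $tg$, part (i) gives that the certification degree of $tg$ is exactly one more than the representation degree of $g$, hence at least $2^{2^n}$, while part (ii) gives that the minimal number of terms of a certificate equals the minimal total number of terms of a representation, hence at least $2^{2^n}$. Since $t$ times a binomial is again a binomial and the $x_j$ are monomials, every generator is a binomial or a monomial and $tg$ is a binomial, which yields the final clause; moreover $\deg f_i,\deg g \leq 5$ gives $\deg(tf_i),\deg(tg) \leq 6$ and $\deg x_j = 1$. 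Relabelling the combined list $tf_1,\dots,tf_s,x_1,\dots,x_n$ as the new generators keeps the generator count and the number of ambient variables of $\KK[t,x_1,\dots,x_n]$ at $\Landau(k)$, so all stated bounds follow once the constant in the exponent of $2^{2^{\Landau(k)}}$ is normalized against the final variable count, which the $\Landau(k)$ freedom permits.

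The main obstacle is bookkeeping rather than structure: confirming the degree constant $\leq 6$ requires unwinding the particular Mayr \& Meyer gadget, for which I would lean on the accessible expositions \cite{Bayer1988} or \cite{kayser2022groebnercomplexity}. A second delicate point is that a \emph{single} instance must simultaneously force both a doubly-exponential degree and a doubly-exponential number of replacement steps; this too is established in those references, so I would cite it rather than reprove it.
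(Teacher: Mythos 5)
Your proposal is correct and follows essentially the same route as the paper: it takes the Mayr--Meyer \CSG instances, passes through the binomial ideal membership reduction (with the degree-$\leq 5$ refinement from \cite{Bayer1988}) to get doubly-exponential lower bounds on both representation degree and number of terms, and then transports these to certification degree and certificate terms via \Cref{thm:reduction_ideal_algebra} and \Cref{cor:repdeg_to_certdeg}. The only difference is that you spell out the bookkeeping (the degree shift by one from attaching $t$, and the need for a single instance witnessing both lower bounds simultaneously) which the paper leaves implicit.
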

\begin{proof}
The \CSG instances from \cite{Mayr1982} lead to ideal membership of $\Landau(k)$ binomials in $\Landau(k)$ variables of degree $\leq 5$ (to obtain this exact degree bound, use the slightly improved degree bound from \cite{Bayer1988}) such that every representation has degree and number of terms bounded below by $2^{2^n}$. Applying the reduction from \Cref{thm:reduction_ideal_algebra}, we obtain an instance of $\AlgMem_\KK$, whose smallest certificate $p$ has the same number of terms and greater total degree by \Cref{cor:repdeg_to_certdeg}.
\end{proof}

\begin{remark}
Following \cite{Bayer1988}, this construction can be easily adapted to prove lower bounds of the form $d^{2^k}$ at the expense of increasing the algebra generator degree to $d+ \Landau(1)$. This asymptotically matches the double-exponential upper bounds from \Cref{cor:algmem_degree_bound}. It would be interesting to get more refined lower bounds in terms of other properties of the algebra. For example, what could analogous statements of \Cref{thm:hermann_bound} and \Cref{thm:gb_dim_mayr_ritscher} be?
\end{remark}

While the worst-case examples for \CSG are too complicated to recall here, we can give more details in the homogeneous case. Here the certification degree can not blow up, however, we can still produce $\AlgMem_\KK(\Homog)$ instances that require exponentially many terms in their certificates.

\begin{theorem}
There exist a subalgebra  $A = \KK[f_1,\dots,f_{5n}] \subseteq \KK[x_1,\dots,x_{3n+3}]$ generated by single variables and homogeneous binomials of degree $\leq 3$, and a binomial $g$ of degree $n+2$ with the following property:
$g \in A$, but every polynomial $p \in \KK[\t]$ with $g=p(f_1,\dots,f_{5n})$ has at least $2^{n+1}$ terms.
\end{theorem}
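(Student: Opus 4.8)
The plan is to encode an $n$-bit binary counter as a homogeneous word problem for a commutative semigroup (in the spirit of the counting automaton of \Cref{exa:binaryLBA}) and then transport it to $\AlgMem_\KK(\Homog)$ through \Cref{thm:reduction_ideal_algebra} and the term-count correspondence of \Cref{cor:repdeg_to_certdeg}. I would work in the $3n+2$ variables $a_1,\dots,a_n,b_1,\dots,b_n,c_1,\dots,c_{n+1},d$, reading a monomial $z_1\cdots z_n\,\tau$ with $z_i\in\{a_i,b_i\}$ and a single control token $\tau\in\{c_1,\dots,c_{n+1},d\}$ as the counter configuration ``bit $i$ is $0$ if $z_i=a_i$ and $1$ if $z_i=b_i$, with a pending carry at position $j$ if $\tau=c_j$ and idle if $\tau=d$''. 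The replacement rules are the homogeneous binomials of degree at most $2$
\[
a_ic_i - b_id,\qquad b_ic_i - a_ic_{i+1}\quad(1\le i\le n),\qquad d-c_1,
\]
which respectively set the examined bit and end an increment, propagate a carry past a set bit, and begin the next increment. Taking $m\coloneqq a_1\cdots a_n\,c_1$ (value $0$, carry pending) and $m'\coloneqq b_1\cdots b_n\,d$ (value $2^n-1$, idle), the deterministic forward run of these rules counts from $0$ up to $2^n-1$, so $m\equiv_{\mathcal R}m'$. Applying \Cref{thm:reduction_ideal_algebra} with a fresh variable $t$ multiplies each rule and $m-m'$ by $t$, producing homogeneous binomial generators of degree at most $3$, the single-variable generators $a_i,b_i,c_i,d$, and the target binomial $g\coloneqq t(m-m')$ of degree $n+2$; the ambient ring is $\KK[x_1,\dots,x_{3n+3}]$ and the construction uses $\Landau(n)$ generators (the stated count $5n$ follows after minor simplifications of the rule set). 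Membership $g\in A$ is immediate from $m\equiv_{\mathcal R}m'$.

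The content is the lower bound on the number of terms of a certificate. By \Cref{cor:repdeg_to_certdeg}(ii) this equals the minimal number of terms in a representation of $m-m'$ over the binomial ideal, which by the Mayr \& Meyer correspondence \cite{Mayr1982} equals the minimal number of single replacement steps in a derivation $m\equiv_{\mathcal R}m'$; so it suffices to bound the latter below by $2^{n+1}$. First I would record two invariants of $\equiv_{\mathcal R}$: every rule preserves both the total token degree $\deg_d+\sum_j\deg_{c_j}$ and, for each position $i$, the bit degree $\deg_{a_i}+\deg_{b_i}$. Since all of these equal $1$ for $m$, every monomial equivalent to $m$ is again a genuine configuration — exactly one token and exactly one of $a_i,b_i$ per position — so a derivation is a walk on the graph whose vertices are configurations and whose edges are rule applications. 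On a configuration the token already determines the unique forward-applicable rule, so the forward moves are given by a single partial transition map $F$, and every edge has the form $\{Y,F(Y)\}$.

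Finally I would bound the length of such a walk by a potential that changes by exactly $1$ along each edge. The key lemma is that $F$ is acyclic: writing $V(B)$ for the binary value of a bit pattern $B$, the integer monovariant $\mu(B,c_i)\coloneqq(n+2)\bigl(V(B)+2^{i-1}-1\bigr)+i$ and $\mu(B,d)\coloneqq(n+2)V(B)$ strictly increases along every forward step (a short case analysis of the three rule types). Because there are only finitely many configurations, acyclicity lets me define $\Phi(X)$ as the number of forward steps from $X$ until $F$ becomes undefined — the overflow sink $a_1\cdots a_n\,c_{n+1}$ — so that $\Phi\circ F=\Phi-1$; hence each edge changes $\Phi$ by exactly $1$ and any walk from $m$ to $m'$ has length at least $\Phi(m)-\Phi(m')$. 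Since $m'$ lies on the forward orbit of $m$, this difference is precisely the number $L$ of forward steps from $m$ to $m'$, and a direct count of increments and carries gives $L=3\cdot2^{n}-n-4\ge 2^{n+1}$ for $n$ large. I expect the main obstacle to be exactly this step — ruling out exponential ``shortcuts'' through the exponentially many intermediate monomials — which the invariants (forcing every derivation onto genuine configurations) together with the exactly-$\pm 1$ potential $\Phi$ are designed to overcome; verifying that the monovariant $\mu$ genuinely forbids cycles is the most delicate piece of bookkeeping.
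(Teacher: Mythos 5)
Your proposal is correct and follows essentially the same route as the paper: a homogeneous binary-counting commutative semigroup, transported to $\AlgMem_\KK(\Homog)$ via \Cref{thm:reduction_ideal_algebra} and \Cref{cor:repdeg_to_certdeg}, with the certificate term count identified with the minimal derivation length à la Mayr--Meyer. Your explicit no-shortcut argument (the multidegree invariants forcing every intermediate monomial to be a genuine configuration, together with the monovariant $\mu$ and the potential $\Phi$ changing by exactly $\pm1$ per edge) makes rigorous what the paper only asserts with \enquote{must essentially contain this derivation}; the one loose end is that $3\cdot 2^{n}-n-4\geq 2^{n+1}$ only holds for $n\geq 3$, so the small cases need a separate (trivial) remark.
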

\begin{proof}
The construction is based on the binary-counting LBA from \Cref{exa:binaryLBA}. In \cite[Theorem 17]{Mayr1997} Mayr describes how to reduce \LBAhalt to commutative semigroups/ideals, proving \PSPACE-hardness of $\IdealMem_\KK(\Homog)$. Applying our reduction (\Cref{thm:reduction_ideal_algebra}), we obtain a binary-counting subalgebra.
The configuration $(q,i,b_0,\dots,b_{n+1})$ of the LBA is represented by the monomial $qh_i x_{0,{\triangleright}} x_{1,b_1} \dotsm x_{n,b_n}x_{n+1,{\triangleleft}}$ and the replacement rules/binomial generators mimic the transition function $\delta$.

We can simplity the resulting algebra significantly: We can remove redundant transitions (for example, we don't actually need ${\triangleright},{\triangleleft}$ since the head position variable knows the current position), and the new variable $t$ introduced in \Cref{thm:ideal_to_algebra} can also be omitted, as the combined variables $qh_i$ have a similar effect. This leads to the following subalgebra:
\begin{align*}
\mathcal{T} = \{q_0,q_1\} &\mathbin{\dot\cup} \{h_0,\dots,h_n\}, \quad \x = \{x_{1,\NULL},x_{1,\EINS},\dots,x_{n,\NULL},x_{n,\EINS}\}, \\
\mathcal{R} &\coloneqq \set{q_0h_ix_{i,\NULL} - q_1h_{i-1}x_{i,\EINS} | 1\leq i \leq n} \\
        &{}\; \cup \set{q_0h_ix_{i,\EINS} - q_0h_{i+1}x_{i,\NULL} | 1\leq i \leq n-1} \\
        &{}\; \cup \set{q_1h_ix_{i,\NULL} - q_1h_{i-1}x_{i,\NULL} | 1\leq i \leq n} \\
        &{}\; \cup \set{q_1h_0 - q_0h_1}, \\
  A &\coloneqq \KK[f_1,\dots,f_{5n}] \coloneqq \KK[\mathcal{T} \cup \x], \\
  g &\coloneqq q_0h_1x_{1,\NULL}\dotsm x_{n,\NULL} - q_0h_nx_{1,\NULL}\dotsm x_{n-1,\NULL}x_{n,\EINS}.
\end{align*}
Then $g \in A$ by construction, since the binary counter will go from $(q_0,1,\triangleright\NULL \dots \NULL \triangleleft)$ to $(q_0,1,\triangleright\EINS \dots \EINS \triangleleft)$ and then walk to the right, erasing the $\EINS$'s until it reaches the configuration $(q_0,n,\triangleright\NULL \dots \NULL \EINS \triangleleft)$. On the way it writes every number $0,\dots,2^{n-1}$ on the tape, taking at least $2$ steps each time, for a total of $\geq 2^{n+1}$ steps. By \Cref{cor:repdeg_to_certdeg} any certificate for $g \in A$ must essentially contain this derivation, hence has at least $2^{n+1}$ terms.
\end{proof}

An implementation of the homogeneous binary-counting subalgebra in \texttt{Macaulay2} can be found at \href{https://mathrepo.mis.mpg.de/ComplexityOfSubalgebras}{\nolinkurl{mathrepo.mis.mpg.de/ComplexityOfSubalgebras}}.

%%%%%%%%%%%%%%%%%%%%%%%%%%%%%%%%%%%%%%%%%%%%%%%%%%%%%%%%%%%%%%%%%%%%%%%%
\section{Monomial subalgebras and SAGBI bases}\label{sec:monomials}
%%%%%%%%%%%%%%%%%%%%%%%%%%%%%%%%%%%%%%%%%%%%%%%%%%%%%%%%%%%%%%%%%%%%%%%%

In this final section, we consider the complexity of monomial algebra membership and consider some questions related to SAGBI bases.
Monomial subalgebras $A$ are $\NN^n$-graded in the sense that
\[
A = \bigoplus_{\substack{\alpha \in \NN^n\\ x^\alpha \in A}} \KK x^\alpha \subseteq \KK[\x].
\]
This has the useful consequence that a polynomial $\sum_{\alpha} c_\alpha x^\alpha$ is in $A$ if and only if every monomial $x^\alpha$, $c_\alpha\neq 0$, is in $A$. Furthermore, monomial algebras are related to linear programming and linear Diophantine equations \cite[Remark 1.9]{Robbiano1990}, as
\begin{equation}\label{eq:diophantine}
  x^\beta \in K[x^{\alpha_1},\dots,x^{\alpha_s}] \quad \Longleftrightarrow\quad \exists c \in \NN^s\text{ s.t.\ } \beta = \sum_{i=1}^s c_i\alpha_s.
\end{equation}

\begin{theorem}\label{thm:Monomial_gen}
For any $\KK$ the problem $\emph{\AlgMem}_{\KK}(\Mon)$ is \NP-complete. This is true even when restricting to square-free monomials.
\end{theorem}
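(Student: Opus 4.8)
The plan is to establish both the upper bound $\AlgMem_\KK(\Mon) \in \NP$ and matching \NP-hardness, with the latter already holding inside the class of square-free monomials. Both directions exploit the $\NN^n$-grading of monomial algebras together with the Diophantine reformulation \eqref{eq:diophantine}: membership of a single monomial $x^\beta$ in $\KK[x^{\alpha_1},\dots,x^{\alpha_s}]$ is equivalent to solvability of the linear system $\beta = \sum_{i=1}^s c_i\alpha_i$ over $c \in \NN^s$.

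For the upper bound I would first use the grading to reduce membership of a general $g = \sum_\alpha c_\alpha x^\alpha$ to membership of each of its (at most polynomially many, in both dense and sparse encodings) monomials $x^\alpha$ with $c_\alpha \neq 0$. For a single monomial the question becomes nonnegative integer feasibility of $\beta = \sum_i c_i\alpha_i$, which I would place in \NP by guessing the vector $c$ and verifying the identity by direct arithmetic. The one nontrivial ingredient is that a feasible system of this shape always admits a solution of polynomial bit-size in the input; this is a classical bound on solution sizes of integer linear programs (Papadimitriou), valid whether the exponents $\alpha_i$ are written in unary or binary, which is exactly what guarantees that both the guessed certificate and its verification stay polynomial.

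For the lower bound I would reduce from \OneInThreeSAT, which stays \NP-complete and, crucially, produces only small square-free data so the reduction is insensitive to the encoding. Given sets $S_1,\dots,S_n \subseteq \NN$ with universe $U = \bigcup_i S_i$, I introduce one ring variable $y_i$ per clause $S_i$; for each element $v \in U$ take the generator $g_v = \prod_{i : v \in S_i} y_i$, and set the target monomial $g = y_1\cdots y_n$. All of these are square-free, and the map is log-space computable after relabelling $U$ as $\{1,\dots,|U|\}$. Writing a candidate combination as $g = \prod_{v} g_v^{c_v}$ with $c_v \in \NN$ translates into the system $\sum_{v \in S_i} c_v = 1$ for every clause $i$. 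The key observation is that these equations force every $c_v \in \{0,1\}$, since any $c_v \geq 2$ would overshoot a clause sum, so that $T := \{v : c_v = 1\}$ is precisely a set with $|T \cap S_i| = 1$ for all $i$; conversely any such $T$ yields an admissible exponent vector. Hence $g \in \KK[\{g_v\}_{v \in U}]$ if and only if the \OneInThreeSAT instance is satisfiable.

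The main obstacle I anticipate is not the algebra but the encoding-robustness of hardness. The obvious reduction from \SubsetSum via \eqref{eq:diophantine} in a single variable yields \NP-hardness only for binary exponents, since \SubsetSum becomes pseudo-polynomial under unary encoding and its reduction intrinsically needs large, non-square-free exponents. Routing the argument through \OneInThreeSAT circumvents this, because the resulting exponents are bounded by $n$ and square-free, making unary and binary encodings equivalent; the only genuine content is then the $0/1$-forcing argument. On the upper-bound side the care needed is to invoke the integer-programming solution-size bound correctly for binary-encoded exponents, so that membership genuinely lands in \NP rather than merely in \PSPACE.
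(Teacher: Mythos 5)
Your proposal is correct and follows essentially the same route as the paper: reduce to a single monomial via the $\NN^n$-grading, place the Diophantine feasibility question \eqref{eq:diophantine} in \NP by guessing a polynomially bounded exponent vector, and prove hardness by reducing \OneInThreeSAT with exactly the same square-free incidence construction ($\beta=(1,\dots,1)$, generators given by clause-incidence vectors). The only cosmetic difference is that you invoke the general integer-programming solution-size bound where the paper uses the elementary observation $c_j \leq \max_i \beta_i$; your explicit $0/1$-forcing argument is a detail the paper leaves implicit.
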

\begin{proof}
For \NP membership one immediately reduces to the case where the polynomial to test $f$ is a monomial due to the $\NN^n$-grading. In \Cref{eq:diophantine} we have $c_j \leq \max_i \beta_i$, so the bit length of $c_j$ is bounded by the bit length of $\beta$. Hence, non-deterministically guessing $c$ yields a \NP-algorithm.

For \NP-hardness, one can reduce from the \NP-complete problem  \OneInThreeSAT (\Cref{sec:complexity}). Indeed, given sets $S_1,\dots,S_n \subseteq \{1,\dots,s\}$, then let $\alpha_1,\dots,\alpha_s \in \{0,1\}^n$ be the integer vectors with $(\alpha_i)_j = 1$ when $i \in S_j$. Set $\beta = (1,\dots,1) \in \NN^n$. Then \Cref{eq:diophantine} encodes exactly the question for \OneInThreeSAT, a solution corresponding to $T = \set{i | c_i = 1}$. In this construction, all monomials $x^{\alpha_i},x^\beta$ are square-free.
\end{proof}

We see that $\AlgMem_\KK(\Mon)$ is \NP-complete even for polynomials of degree $\leq n$. The same is true if we instead bound the number of variables -- if the exponents are encoded in binary.

\begin{theorem}\label{thm:Monomial_bounded}
The problem $\emph{\AlgMem}_\KK(\Mon(x_1,\dots,x_n))$ for \emph{fixed} $n\geq 1$ is \NP-complete for binary exponent encoding and in \TCO with unary encoding.
\end{theorem}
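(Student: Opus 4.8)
The plan is to route both statements through the Diophantine reformulation \Cref{eq:diophantine}. By the $\NN^n$-grading of a monomial algebra, a polynomial lies in $A$ iff each of its monomials does, so it suffices to decide, for a single target exponent $\beta \in \NN^n$ and generators $\alpha_1,\dots,\alpha_s \in \NN^n$, whether $\beta = \sum_i c_i\alpha_i$ for some $c \in \NN^s$; this is integer feasibility for $n$ equations in $s$ nonnegative unknowns, i.e.\ membership in the affine monoid $M = \NN\alpha_1 + \dots + \NN\alpha_s$. Both claims reduce to analyzing this membership under the two encodings.

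For the binary case, \NP membership is inherited from \Cref{thm:Monomial_gen}, since the fixed-$n$ problem is a restriction of the general monomial problem. For \NP-hardness I would reduce from \SubsetSum, which is \NP-complete precisely because its numbers are written in binary. Already $n=1$ suffices: map an instance $(a_1,\dots,a_s;b)$ to the generators $x_1^{a_1},\dots,x_1^{a_s}$ and target $x_1^b$ inside $\KK[x_1,\dots,x_n]$, leaving $x_2,\dots,x_n$ unused. This is a direct copy of the binary numbers onto exponents, hence log-space computable, and by \Cref{eq:diophantine} with $n=1$ the target lies in the algebra iff $b = \sum_i c_ia_i$ has a solution $c \in \NN^s$, \ie iff the \SubsetSum instance is a yes-instance. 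The embedding works for every fixed $n \geq 1$.

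For the unary case the decisive feature is that every exponent is bounded by the input length $N$, whence each coefficient satisfies $c_i \leq \max_j \beta_j \leq N$ (compare a single coordinate $j$ with $(\alpha_i)_j \geq 1$). A naive reachability or dynamic program over the polynomially many lattice points below $\beta$ would only place the problem in \NL; to reach \TCO I would instead cap the support. I would invoke the integer Carathéodory theorem (Cook--Fonlupt--Schrijver, refined by Seb\H{o}): if $\beta \in M$, then $\beta$ is a nonnegative integer combination of at most $2n-1$ of the $\alpha_i$. Since $n$ is fixed, this makes $\beta \in M$ equivalent to a disjunction over the $\Landau(s^{2n-1})$ choices of a support of size $\leq 2n-1$ and the $\Landau(N^{2n-1})$ choices of accompanying coefficients in $\{0,\dots,N\}$ — polynomially many candidate certificates. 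Each candidate check forms, in each of the $n$ coordinates, a sum of a constant number of products $c_i(\alpha_i)_j$ of $\Landau(\log N)$-bit integers and compares it to $\beta_j$; iterated addition, multiplication and comparison of polynomially bounded integers all lie in \TCO, and a single unbounded-fan-in \texttt{OR} over the candidates preserves constant depth. Uniformity is clear as all index and coefficient ranges are explicit polynomials in the input length.

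The main obstacle is exactly the \TCO upper bound: the obvious algorithms (guess-and-check, or monoid reachability) only yield \NP respectively \NL, and the layered lattice structure by itself does not bring the circuit depth down to a constant. The crux is the constant support bound from integer Carathéodory, which turns the search for a certificate into a polynomial-size family of independent, constant-depth arithmetic tests; verifying that this packaging stays inside \TCO rather than only in a larger parallel class such as \cclass{NC} or \NL is the step requiring care. I would also check that generators equal to the constant $1$ (\ie $\alpha_i = 0$) and common factors do not disturb the coefficient bound, but these are routine.
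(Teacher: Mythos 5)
Your treatment of the binary case is correct and is exactly the paper's argument: \NP membership is inherited from \Cref{thm:Monomial_gen}, and \NP-hardness for every fixed $n\geq 1$ follows by transcribing a \SubsetSum instance onto the exponents of a single variable.

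The unary case, however, has a genuine gap: the integer Carath\'eodory theorem of Cook--Fonlupt--Schrijver (and Seb\H{o}'s refinement) asserts the support bound $2n-1$ only when $\alpha_1,\dots,\alpha_s$ form a \emph{Hilbert basis} of the cone they generate. The generators of an arbitrary monomial subalgebra need not be a Hilbert basis, and the constant support bound fails already for $n=1$: take $\alpha_1=6$, $\alpha_2=10$, $\alpha_3=15$ and $\beta=31$. Then $x^{31}=x^6\cdot x^{10}\cdot x^{15}\in\KK[x^6,x^{10},x^{15}]$, but $31$ is not a nonnegative integer combination of any proper subset of $\{6,10,15\}$ (any combination omitting $6$ is divisible by $5$, omitting $10$ is divisible by $3$, omitting $15$ is divisible by $2$), so the minimal support is $3$, whereas $2n-1=1$. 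Variants of this construction (generators $N/p_i$ for distinct primes $p_i$ dividing $N$) force the support to grow with the input even in unary encoding, so your disjunction over supports of size $\leq 2n-1$ produces false negatives, and enlarging the support cap to what is actually needed makes the candidate family superpolynomial. The paper avoids this entirely: it reduces general coefficients $c_i\in\NN$ to $\{0,1\}$ coefficients by adjoining the doubled generators $x^{2^j\alpha_i}$ for $j\leq\lfloor\log_2|\beta|\rfloor$ (a polynomial blow-up in unary encoding), and then invokes Kane's generating-function \TCO circuits for unary vector-valued subset sum (alternatively a \TCO version of Courcelle's theorem). If you want to salvage your route, you would need such a reduction to bounded coefficients or a genuinely different counting argument; the Carath\'eodory step as stated cannot be repaired.
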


Here $\TCO \subsetneq \DP$ is a low uniform circuit complexity class. This model of computation solves a decision problem on input $b_1\dots b_n$ by evaluating a $n$-ary boolean function $f_{C_n}(b_1,\dots,b_n)$. The function is described by a Boolean circuit $C_n$ of size polynomial in $n$ with a bounded number of layers. The gates of the circuits compute the boolean functions $\texttt{not}$ and $\texttt{and},\texttt{or},\texttt{maj}$ of any arity $\{0,1\}^n\to\{0,1\}$, where the majority gate is the function
\[
\texttt{maj}(b_1,\dots,b_n) = \begin{cases}
    1 & \text{if } \#\set{i|b_i=1} \geq \frac{n}{2} \\
    0 & \text{otherwise.}
\end{cases}
\]
For more information on $\TCO$ and circuit complexity classes, see \cite[Chapter 4]{Vollmer1999}.

\begin{proof}
Encoding the exponents as binary, the unary case $n=1$ is a direct translation of the \SubsetSum problem, which is \NP-hard.

On the other hand, if the monomials are encoded in unary, then a generating-function approach as in \cite{Kane2017}  provides a family of circuits in \TCO deciding $\AlgMem_K(\Mon(x_1,\dots,x_n))$. In more detail, Kane describes $\TCO$ circuits deciding the unary vector-valued subset sum problem, which corresponds to solutions of \Cref{eq:diophantine} with $c_i \in \{0,1\}$. We can reduce the case of arbitrary $c_i \in \NN$ to the $\{0,1\}$ case by replacing every generator $x^{\alpha_i}$ by $x^{\alpha_i},x^{2\alpha_i},x^{4\alpha_i},\dots,x^{2^\kappa \alpha_i}$, $\kappa = \lfloor \log_2 |\beta| \rfloor$. Since the numbers are encoded in unary, the input size is proportional to $C=|\beta|+\sum_{i=1}^s |\alpha_i|$, and this larger generating set has size $\Landau(C^2)$.

Alternatively, one can apply a \TCO-variant of Courcelle’s theorem to obtain \TCO-membership, compare \cite[Theorem 13]{Elberfeld2012}.
\end{proof}

\begin{remark}
The \NP-hardness results from \Cref{thm:Monomial_gen} and \ref{thm:Monomial_bounded} are in stark contrast to the analogous results for monomial \emph{ideals}: Monomial ideal membership is computationally trivial, as $x^\beta \in \langle x^{\alpha_1},\dots,x^{\alpha_s} \rangle$ if and only if component-wise $\beta \geq \alpha_i$ for some $i$.
\end{remark}

\subsection{Some remarks on SAGBI bases} \label{sec:SAGBIconjecture}

We return to a general subalgebra $A = \KK[f_1,\dots,f_s] \subseteq \KK[\x]$ equipped with a monomial order $\prec$. In \Cref{sec:ideals} we defined the concept of SAGBI bases $S \subseteq A$ as subsets with $\KK[\set{\ini_\prec f | f \in S}] = \ini_\prec A$.
SAGBI stands for \enquote{Subalgebra Analog to Gröbner Bases for Ideals} \cite{Robbiano1990} and they are used in practice for deciding subalgebra membership \cite{Burr2024}. Much of the theory of Gröbner bases is paralleled for SAGBI bases, such as the \emph{subduction algorithm} deciding subalgebra membership, reminiscent of the division algorithm for Gröbner bases \cite[Section 6.6]{Kreuzer2005-rb}.

The previous results in this section provide a modest complexity lower bound of deciding subalgebra membership in the presence of SAGBI bases. The author is not aware of an analogous result for ideal membership given a Gröbner basis.

\begin{corollary}
The problem $\emph{\AlgMem}_K$ is \NP-hard, even if the input polynomials $f_1,\dots,f_s$ form a finite SAGBI basis of $\KK[f_1,\dots,f_s]$.
\end{corollary}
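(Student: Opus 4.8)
The plan is to read this off directly from the \NP-hardness of the monomial case, once I observe that monomial generators are always a SAGBI basis. First I would recall that \Cref{thm:Monomial_gen} already establishes \NP-hardness of $\AlgMem_\KK(\Mon)$ via a log-space reduction from \OneInThreeSAT, and that in every instance produced by that reduction the generators $f_i = x^{\alpha_i}$ and the query $g = x^\beta$ are monomials. So it suffices to show that whenever the generators are monomials they automatically form a finite SAGBI basis of the subalgebra they generate; the instances produced by the reduction then already lie inside the restricted problem, and the hardness transfers verbatim through the same log-space reduction.

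The key observation is that a monomial subalgebra is its own initial algebra. Concretely, for $A = \KK[x^{\alpha_1},\dots,x^{\alpha_s}]$ the $\NN^n$-grading recalled at the start of this section shows that $A$ is spanned by the monomials it contains, so for any $f = \sum_\alpha c_\alpha x^\alpha \in A$ the leading term $\ini_\prec(f)$ is a scalar multiple of a monomial already lying in $A$. Hence $\ini_\prec(A) = A$. On the other hand $\ini_\prec(x^{\alpha_i}) = x^{\alpha_i}$ for each generator, so
\[
\KK[\set{\ini_\prec(x^{\alpha_i}) \mid 1 \leq i \leq s}] = \KK[x^{\alpha_1},\dots,x^{\alpha_s}] = A = \ini_\prec(A),
\]
which is exactly the defining condition for $\set{x^{\alpha_i}}_i$ to be a SAGBI basis of $A$ (\Cref{def:Groebner_SAGBI}). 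This set is finite, so it is a finite SAGBI basis.

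Combining the two steps, every instance output by the reduction in \Cref{thm:Monomial_gen} has generators forming a finite SAGBI basis, and the reduction remains log-space computable, so $\AlgMem_\KK$ stays \NP-hard under this restriction. I do not expect any real obstacle here: the only point that needs care is the identity $\ini_\prec(A) = A$ for monomial subalgebras, which is immediate from the $\NN^n$-grading. The SAGBI property is, in effect, trivial precisely because the monomial structure makes initial-term formation act as the identity on the generators, so no genuine computation of an initial algebra is required.
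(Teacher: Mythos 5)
Your proposal is correct and follows the same route as the paper: the paper's proof is exactly the observation that a finite set of monomial generators is automatically a SAGBI basis of the algebra it generates, so the \NP-hardness instances from \Cref{thm:Monomial_gen} already satisfy the restriction. Your expanded justification that $\ini_\prec(A) = A$ for monomial subalgebras via the $\NN^n$-grading is a correct fleshing-out of the step the paper states in one sentence.
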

\begin{proof}
Subalgebras generated by a finite set $S$ of monomials have $S$ as a SAGBI basis. Hence \Cref{thm:Monomial_gen} provides the \NP lower bound.
\end{proof}

A major difference to the ideal world is that not every finitely generated subalgebra has a \emph{finite} SAGBI basis.
\begin{example}
The subalgebra $\KK[x_1,x_1x_2-x_2^2,x_1x_2^2] \subseteq \KK[x_1,x_2]$ has the non-finitely generated initial algebra $\KK[\set{x_1x_2^k|k\geq 0}]$ for any monomial order with $x_1\succ x_2$ \cite[Example 4.11]{Robbiano1990}.
\end{example}

We therefore propose the study of the following two decision problems related to the initial algebra.

\begin{table}[H]
\centering
\problemBox[0.48\textwidth]{$\problemText{SAGBIfinite}_\KK(\textsf{C})$, $\problemText{SAGBIfinite}$}
{$f_1,\dots,f_s \in \textsf{C}$ (or $\KK[\x]$)} 
{Does $\KK[f_1,\dots,f_s]$ have a finite SAGBI basis?}
\problemBox[0.48\textwidth]{$\IniAlgMem_\KK(\textsf{C})$, $\IniAlgMem_\KK$}
{$f_1,\dots,f_s \in \textsf{C}$ (or $\KK[\x]$);\newline $x^\alpha \in \Mon(\x)$}
{Is $x^\alpha \in \ini_\prec \KK[f_1,\dots,f_s]$?}
\end{table}

Robbiano \& Sweedler showed that an algorithm somewhat analogous to Buchberger's algorithm for Gröbner bases can be used to enumerate a SAGBI basis, which will terminate if (and only if) $A$ has a \emph{finite} SAGBI basis. A procedure that returns \texttt{yes} if the output is yes, but never terminates if the output is no, is called a \emph{semi-algorithm}, and a problem is \emph{semi-decidable} or \emph{recursively enumerable} if there is a semi-algorithm \enquote{solving} it.

\begin{theorem}[Robbiano \& Sweedler \cite{Robbiano1990}]
$\emph{\IniAlgMem}_\KK$ and $\emph{\problemText{SAGBIfinite}}_\KK$ are semi-decidable (over a computable field).
\end{theorem}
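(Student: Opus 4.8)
The plan is to exhibit one semi-algorithm for each of the two problems. The ingredients are the SAGBI completion procedure of Robbiano \& Sweedler described above \cite{Robbiano1990}, which enumerates a (possibly infinite) SAGBI basis of $A = \KK[f_1,\dots,f_s]$ and halts precisely when a finite one exists, together with the decidability of monomial subalgebra membership from \Cref{thm:Monomial_gen}. Throughout I use that over a computable field $\KK$ one can enumerate the countable set $\KK[\t]$, evaluate the map $p \mapsto p(f_1,\dots,f_s)$, compute leading monomials $\ini_\prec(\cdot)$, and test equality with $0$; these are precisely the primitives a computable field provides.

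For $\problemText{SAGBIfinite}_\KK$ the semi-algorithm is immediate: run the completion procedure on $f_1,\dots,f_s$ and output \enquote{yes} as soon as it terminates. By the cited termination criterion this halts if and only if $A$ admits a finite SAGBI basis, which is exactly the defining behaviour of a semi-algorithm for this problem; when no finite basis exists the procedure runs forever and we (correctly) never answer.

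For $\IniAlgMem_\KK$ I would not need the full completion procedure and would instead enumerate $A$ directly. Enumerate certificates $p_1,p_2,\dots \in \KK[\t]$ and maintain the finite set of leading monomials $M_k \coloneqq \set{\ini_\prec\big(p_j(f_1,\dots,f_s)\big) | j \leq k,\ p_j(f_1,\dots,f_s) \neq 0}$. After forming $M_k$, run the decidable test of \Cref{thm:Monomial_gen} (\ie the Diophantine feasibility of \Cref{eq:diophantine}) to check whether $x^\alpha \in \KK[M_k]$, and halt with \enquote{yes} the first time it succeeds. Soundness is clear since each $\ini_\prec(p_j(f_1,\dots,f_s))$ lies in $\ini_\prec A$, whence $\KK[M_k] \subseteq \ini_\prec A$ for every $k$. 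For completeness, recall that by definition $\ini_\prec A = \KK[\set{\ini_\prec f | f \in A}]$, so if $x^\alpha \in \ini_\prec A$ then $x^\alpha$ is a finite product of leading monomials of elements $g_1,\dots,g_m \in A$; writing each $g_\ell = p_{j_\ell}(f_1,\dots,f_s)$, all the required leading monomials appear in $M_k$ once $k \geq \max_\ell j_\ell$, so the test succeeds at that stage. Conversely, if $x^\alpha \notin \ini_\prec A$ the test fails at every stage and the procedure runs forever.

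The substantive input is entirely the correctness of the Robbiano \& Sweedler completion procedure --- that its output is a SAGBI basis and that termination characterises finiteness --- which I take as given; the rest is bookkeeping. The points that need genuine (if routine) care are effectivity over a computable field, namely that the subduction steps and the monomial-membership subroutine can be carried out with the available primitives, and the completeness argument for $\IniAlgMem_\KK$, whose crux is the observation that membership of a single monomial in $\ini_\prec A$ only ever witnesses finitely many leading terms. I expect this last point --- decoupling \enquote{each monomial needs only finitely much} from \enquote{the whole initial algebra may be infinitely generated} --- to be the conceptual crux, with no single step being technically hard.
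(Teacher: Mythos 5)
Your proposal is correct. Note that the paper does not actually spell out a proof of this statement: it is attributed to Robbiano \& Sweedler and justified only by the preceding description of their completion procedure, whose termination behaviour you (rightly) take as a black box. For $\problemText{SAGBIfinite}_\KK$ your argument is exactly the intended one. For $\IniAlgMem_\KK$ you take a genuinely different and more elementary route: instead of interleaving the completion procedure with subduction (the Robbiano--Sweedler way, which would enumerate a SAGBI basis $s_1,s_2,\dots$ and test $x^\alpha \in \KK[\ini_\prec s_1,\dots,\ini_\prec s_k]$ at each stage), you enumerate \emph{all} certificates $p\in\KK[\t]$ and apply the decidable monomial-membership test of \Cref{thm:Monomial_gen} to the growing set of leading monomials. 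This buys independence from the correctness and the combinatorics of the completion procedure for that half of the theorem, at the cost of a much more redundant enumeration; the key finiteness observation (a single monomial of $\ini_\prec A$ is witnessed by finitely many leading terms, even when $\ini_\prec A$ is not finitely generated) is the same in both versions and you identify it correctly. One point you use implicitly and could state: a monomial lies in a monomial algebra if and only if it \emph{equals} a product of the generators, since distinct monomials are $\KK$-linearly independent --- this is what lets you reduce membership in $\ini_\prec A$ to a finite product of leading monomials and to invoke \Cref{eq:diophantine}.
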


We are not aware of any general better complexity bounds, or even just if these problems are computable at all (though a negative answer would be quite surprising). Future work will provide a more detailed study of the structure and complexity of (infinitely generated) initial algebras.

\section*{Acknowledgments}

I would like to thank my advisor, Simon Telen, as well as Markus Bläser, Peter Bürgisser, Florian Chudigiewitsch, and Fulvio Gesmundo for helpful discussions, in particular Florian for suggesting Courcelle's theorem for \Cref{thm:Monomial_bounded}. I am also indebted to the anonymous reviewers, whose reports helped to improve the presentation and significantly simplify the treatment of \Cref{sec:lower_bounds}. My interest in the complexity of ideals and subalgebras originated from my Master's thesis \cite{kayser2022groebnercomplexity} supervised by Heribert Vollmer \& Sabrina Gaube at Leibniz University Hannover.

\printbibliography

\subsection*{Author's address:}

\noindent Leonie Kayser, MPI-MiS Leipzig \hfill \href{mailto:leo.kayser@mis.mpg.de}{\tt leo.kayser@mis.mpg.de}

\end{document}